\newcommand{\Cover}{\text{\tt Cover}}
\newcommand{\SPoA}{\text{\tt SPoA}}
\newcommand\RR{\mathbb{R}}
\newcommand{\Natural}{\mathbb{N}}
\newtheorem{thm}{Theorem}
\newtheorem{corollary}[thm]{Corollary}
\newtheorem{claim}[thm]{Claim}
\newtheorem{obs}[thm]{Observation}
\newcommand{\cP}{\cal{P}}
\newif\ifdraft
\begin{document}
\title{Intense Competition can Drive Selfish Explorers to \\ Optimize  Coverage\footnote{This work has received funding from the European Research Council (ERC) under the European Union's Horizon 2020 research and innovation programme (grant  agreement No 648032).}}
\date{}

\author{Simon Collet
\thanks{IRIF, CNRS and University Paris Diderot, Paris, France.  E-mail: {\tt Simon.Collet@irif.fr}.}
 \and
Amos Korman
\thanks{
IRIF, CNRS and University Paris Diderot, Paris, France.  E-mail: {\tt Amos.Korman@irif.fr}. }}

\maketitle

\thispagestyle{empty}

\begin{abstract}
We consider a game-theoretic setting in which selfish individuals
compete over resources of varying quality.  The motivating example is
a group of animals that disperse over patches of food of different
abundances. In such scenarios, individuals are biased towards
selecting the higher quality patches, while, at the same time, aiming
to avoid costly collisions or overlaps. Our goal is to investigate the
impact of collision costs on the parallel coverage of resources by the
whole group.

Consider $M$ {\em sites}, where a site $x$ has {\em value} $f(x)$. We think of $f(x)$ as the reward associated with site $x$, 
and assume that if a single individual visits $x$ exclusively, it receives this exact reward. Typically, we assume that if $\ell>1$ individuals visit $x$ then each receives at most $f(x)/\ell$. In particular, when competition costs are high, each individual might receive an amount strictly less than $f(x)/\ell$, which could even be negative. Conversely, modeling cooperation at a site, we also consider cases where each one gets more than $f(x)/\ell$. There are $k$ identical 
{\em players} that compete over the rewards. They independently act in parallel, in a
one-shot scenario, each specifying a single site to visit, without
knowing which sites are explored by others.  The group performance is evaluated by 
 the expected {\em coverage}, defined as the sum of
$f(x)$ over all sites that are explored by at least one player. Since we assume that players cannot coordinate before choosing their site we focus on symmetric strategies. 

The main takeaway message of
this paper is that the optimal symmetric coverage is expected to emerge when collision costs are relatively high, so that the following  ``Judgment of Solomon'' type of rule holds:
If a single player explores a site $x$ then it gains its full reward
$f(x)$, but if several players explore it, then neither one receives
any reward. Under this policy, it turns out that there exists a unique
symmetric Nash Equilibrium strategy, which is, in fact, evolutionary
stable. Moreover, this strategy yields the best possible coverage
among all symmetric strategies. Viewing the coverage measure as the social welfare, this policy thus enjoys a {\em (Symmetric) Price
  of Anarchy} of precisely 1, whereas, in fact, any other
congestion policy has a price strictly greater than 1.  

Our model falls within the scope of mechanism design, and more
precisely in the area of incentivizing exploration. It finds
relevance in evolutionary ecology, and further connects to studies on
Bayesian parallel search algorithms.

\end{abstract}
\pagenumbering{arabic}


\section{Introduction}\label{sec:intro}
Studying the way humans and other animals distribute themselves in their environment is a cornerstone of ecology and the social sciences \cite{Amoebae,Social-foraging,hills, broom2013game}. 
In many of these contexts, dispersal is governed by two contradicting
forces. On the one hand, individuals are biased towards selecting the  higher quality patches, and, on the other hand, they aim to avoid costly collisions or overlaps, which can significantly deteriorate the value of a patch \cite{IFD-review,IFD-review2,FL69}.  
In nature, collision costs can be caused by various factors, including aggressive behavior,
or merely due to sharing the patch equally between colliding individuals (a.k.a., scramble competition \cite{Nicholson}). 
The purpose of this paper is to investigate the impact of collision costs on the overall coverage of resources by the whole group.

Understanding the impact of collisions on the overall parallel coverage  can also have implications to the way centralized entities incentive individuals in human organizations. For example, research foundations are often interested in promoting specific research topics, and offer  
 grants to attract researchers to such topics. The question of how to properly define a grant policy that is expected to yield a desirable distribution of researchers can have significant implication on the scientific community \cite{kleinberg2011mechanisms}.

This paper proposes a framework to study the impact of collision on the group performances through an {\em algorithmic mechanism design} approach. We focus on a relatively simple scenario, which already provides interesting, and perhaps surprising, insights. 
See Section \ref{sec:conclusions} for suggested generalizations left for future work. 

\subsection{The Dispersal Game} \label{sec:dispersal}
Think about the following imaginary scenario. A group of individuals are searching for resources in some space. Eventually, they gather all resources found for the benefit of the public, and hence the group performances is defined as the sum of the resources values. In order to avoid the free-rider problem, the group incentivizes individuals by granting those who find more resource with higher social status. When an individual exclusively finds a resource of value $f$, a simple mechanism is to let its social status  be proportional to $f$. However, how should the group define the social status of each of $\ell$ individuals in case they all found a resource simultaneously?

Formally, we have  $M$ {\em sites} indexed by $x\in [1,M]$. Each site $x$ has some {\em
  importance value} $f(x)$, and we assume without loss of generality
that sites are ordered such that lower index sites have higher values,
i.e., $f(x)\geq f(x+1)$ for each $x<M$.

We have $k$ players that act a one-shot scenario, each specifying a cite $x$ to visit. Crucially, they cannot coordinate, and each player must commit to a site $x$, without knowing which sites are selected by other
players \cite{STOC}.  Moreover, we assume that players cannot change their mind after committing. 
Formally, each player specifies an index $x$ according to some law, called {\em strategy}. This is simply a probability distribution, indicating, for each $x$, the probability $p(x)$ to explore  $x$.  
A {\em strategy profile} is a vector of $k$ strategies, one for each player. It is called {\em symmetric} when all  players play the same strategy. For short, we shall use the term {\em symmetric strategy} to refer to the strategy being played in the case of a symmetric strategy profile. 

The group performance is  evaluated by the {\em weighted coverage}, defined as the sum of the values $f(x)$ over  visited sites $x$. With the lack of coordination between users, the most the group can hope for is the best achievable coverage among all symmetric strategies, that is, when assuming that all players play the same strategy.
 Formally, the {\em coverage} of a strategy $p$ is defined as:
\begin{equation}\label{eq:collaborative-max}
{\Cover}(p)= \sum_{x=1}^M f(x)(1-(1-p(x))^k)
\end{equation}
Let $p^{\star}$ be a  strategy that maximizes ${\Cover}(p)$. Using a compactness argument, it is not difficult to show that such a strategy exists. 
The following observation implies that the optimal symmetric strategy $p^\star$ has coverage which is, up to a small constant factor, as good as the best achievable coverage in the full coordination case, when the $k$ users can be assigned to precisely cover the $k$ highest value sites. 
\begin{obs}\label{obs-pstar}
$\Cover(p^\star)>(1-\frac{1}{e})\sum_{x=1}^k f(x)$.
\end{obs}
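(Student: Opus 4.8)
The plan is to avoid reasoning about $p^\star$ directly and instead exhibit a single, very simple symmetric strategy whose coverage already exceeds $(1-\tfrac1e)\sum_{x=1}^k f(x)$. Since $p^\star$ maximizes $\Cover(\cdot)$ over all symmetric strategies, any such witness immediately yields the bound. The natural candidate is the strategy $p$ that spreads its mass uniformly over the $k$ highest-valued sites: $p(x)=\tfrac1k$ for $x\in[1,k]$ and $p(x)=0$ for $x>k$.

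Plugging this $p$ into the definition \eqref{eq:collaborative-max}, every site $x\in[1,k]$ contributes $f(x)\bigl(1-(1-\tfrac1k)^k\bigr)$ while the sites $x>k$ contribute $0$, so $\Cover(p)=\bigl(1-(1-\tfrac1k)^k\bigr)\sum_{x=1}^k f(x)$. The only thing left to check is the elementary inequality $(1-\tfrac1k)^k<\tfrac1e$ for every integer $k\ge 1$: for $k=1$ the left-hand side is $0$, and for $k\ge 2$ it follows from the strict bound $1-t<e^{-t}$ (valid for all $t\neq 0$) applied at $t=\tfrac1k$ and raised to the $k$-th power. Hence $\Cover(p)>(1-\tfrac1e)\sum_{x=1}^k f(x)$.

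Combining the two observations, $\Cover(p^\star)\ge \Cover(p)>(1-\tfrac1e)\sum_{x=1}^k f(x)$, which is the claim. (This uses $\sum_{x=1}^k f(x)>0$, i.e.\ that at least one site carries positive value; in the degenerate all-zero case the statement is trivial.)

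Honestly, there is no real obstacle here: the argument is just ``test with the uniform-over-top-$k$ strategy and invoke $(1-1/k)^k<1/e$''. The only mild subtlety worth flagging in the write-up is that one wants the \emph{strict} version of the exponential inequality so that the final inequality is strict as stated, rather than the more familiar non-strict bound $1-t\le e^{-t}$.
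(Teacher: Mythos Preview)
Your proof is correct and follows exactly the same approach as the paper: test the uniform distribution $\hat p$ on the top $k$ sites and invoke $(1-\tfrac1k)^k<\tfrac1e$. You simply supply a bit more detail on why the exponential inequality is strict, which the paper leaves implicit.
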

\begin{proof}
Let $\hat{p}$ be the uniform distribution in $[k]$, that is, we have $\hat{p}(x)=1/k$ for each $x\leq k$. Then,  
$\Cover(p^\star)\geq {\Cover}(\hat{p})= \sum_{x=1}^k f(x)\left(1-\left(1-\frac{1}{k}\right)^k\right)>(1-\frac{1}{e})\sum_{x=1}^k f(x)$. 
\end{proof}

A {\em reward policy} is a function $I(x,\ell)$ specifying the {\em payoff} (or reward) that a player exploring $x$ receives, given that $\ell-1$ other players are also exploring $x$.  
We shall be particularly interested in {\em congestion} reward policies, which  can be written as: \[I(x,\ell)=f(x)\cdot C(\ell)\]
where  $C(\cdot)$ is some  {\em congestion} function, for which  $C(1)=1$, and $C$ is non-increasing. Note that the specification of the policy does not depend on the total number of players $k$, and is influenced only by the number of players that visit $x$. 

The assumption that $C(1)=1$ implies that a player visiting $x$
  exclusively will gain exactly\footnote{Alternatively, we could have
    defined $C(1)$ to be any constant that doesn't depend on $x$. This
    would mean that the reward a single player gets when visiting $x$
    exclusively is proportional to the value $f(x)$. (This assumption
    contrasts with the setting of \cite{kleinberg2011mechanisms}, see
    more details in Section \ref{sec:related}.) Note that in this
    sense, $f(x)$ plays a double role - it has a meaning from the
    perspective of the group and a meaning from the perspective of an
    individual.} $f(x)$. A natural example of a congestion policy is
  the {\em sharing policy} in which $C_{share}(\ell)=1/\ell$, and
  hence if $\ell$ players select site $x$ in parallel then each
  receives a payoff of $f(x)/\ell$. This policy has received a lot of
  attention in both the ecology literature and algorithmic game theory
  community due to its simplicity
  \cite{IFD-review,IFD-review2,kleinberg2011mechanisms}. Cases where
  $C(\ell)>1/\ell$ can model cooperation, in the sense that the
  presence of several players at a site would result in each having
  more than their relative share. On the other extreme, note that we
  do not restrict $C$ to be a positive function. Taking $C(\ell)$ to
  be negative implies that the presence of $\ell$ players at $x$
  damages each of them quite significantly. In the context of animals,
  this could represent aggressive behavior that increases with the
  amount of potential benefit $f(x)$ and could result in severe
  injuries.

 Let $X_i$ be the site specified by player $i$. Given a policy $I$, the goal of each player $i$ is to
maximize the {\em expected payoff} it receives, defined as 
\[\mathbb{E} \big[I(X_i, \sum_{j} \mathbbm{1}_{X_j=X_i})\big]\]

\subsection{Evaluating Policies}

The game we define is symmetric and possesses  both mixed symmetric
equilibria and pure non-symmetric equilibria. Most previous works dealing with such situations prefer to focus on pure equilibria (rather than on the mixed equilibria),
because of their conceptual simplicity. The classical example of this
situation is the well-know game ``battle of the sexes'' \cite{bos}. However we argue otherwise:
the number of pure equilibria grows exponentially with the number of
players, and choosing an equilibrium among those requires coordination
between the players. In some contexts, because coordination is not
possible or communication is too costly, it does not seem reasonable
to expect that a pure equilibrium arises in practice, and instead, it
is more natural to expect the emergence of a symmetric
equilibrium. Since we focus on large distributed systems, where
players do not always possess individual identities and often have
identical behaviors, we will restrict our attention to symmetric
equilibria, and disregard pure equilibria\footnote{At first glance, our focus on symmetric strategies may seem non-compatible with the fact that conspecific animals often appear not to have identical strategies. However, it is important to note that a symmetric strategy profile in the context of ESS (see Section~\ref{sec:ESS}) does not imply that every player follows the same strategy! Indeed, a symmetric strategy profile can arise also if the population contains pure strategies in proportions as specified by the mixed strategy. In this case, since the framework of ESS assumes that at each time $k$ players are selected at random from the population, each of the selected random players ends up having the same mixed strategy. For this reason, restricting attention to
symmetric equilibria is in fact very common in the ecology literature \cite{IFD-review,IFD-review2,FL69}.}.

Viewing $p^{\star}$ as the symmetric strategy that maximizes the social welfare, we adopt an interpretation of the concept of Price of Anarchy to evaluate a policy $I(x,\ell)$, by measuring the coverage of the worst symmetric Nash Equilibrium under $I(x,\ell)$ in comparison with the coverage of $p^{\star}$. 
Specifically,  let $C(\ell)$ be a congestion function, and $f(x)$ be an importance value
 function. Define 
\[ \SPoA(C, f):=\sup_{p\in{\cal{P}}(C,f)}\left\{\frac{\Cover(p^{\star})}{\Cover(p)} \right\},\] where ${\cP}(C,f)$ is the set of symmetric Nash Equilibrium under $I(x,\ell)=f(x) \cdot C(\ell)$. 
 The  {\em Symmetric Price of Anarchy $(\SPoA)$}   of the congestion function $C(\ell)$ is  defined as the sup of $\SPoA(C, f)$ over all functions $f:[M]\rightarrow \RR^+$, where we take the freedom to choose their domain $M$ as well, i.e.,
\[
  \SPoA(C):=\sup_{\stackrel{M \in \Natural}{f:[M]\rightarrow \RR^+}} \left\{\SPoA(C,f) \right\}
\]
By definition,  $\SPoA(C)\geq 1$ for any congestion function $C$. On the other hand, 
there exist congestion functions  with much higher $\SPoA$. For example, taking $C(\ell)\equiv 1$ yields SPoA of roughly $k$. Indeed, under this function, and assuming a strictly decreasing $f$, a strategy at equilibrium would explore the first site with probability 1, and for distributions $f$ that decrease very slowly (and thus are close to uniform), this would yield a gap $\Omega(k)$, with respect to, e.g.,  the uniform strategy. Note, however, that this policy is unlikely to occur in nature, as it implies that a value at a site is fully obtained by all visitors.

The PoA of
  the sharing policy $C_{share}$ was studied by Kleinberg and Oren in a somewhat similar
  model \cite{kleinberg2011mechanisms} (see also Section \ref{sec:related}). Thanks to a more general
  result from Vetta \cite{vetta2002nash}, they showed that the SPoA of
  the sharing policy, defined as the ratio between the best achievable
  coverage and the worst coverage among Nash equilibria, is at most
  2. Since the worst symmetric equilibrium coverage is at least as
  good as the worst equilibrium coverage, and the best achievable
  coverage is at least as good as the best achievable coverage with a
  symmetric strategy profile, this directly implies that our notion of
  SPoA also satisfies $\SPoA(C_{share}) \leq 2$.

\subsection{Ideal Free Distribution (IFD)}
The setting of animals competing over patches of resources, often referred to as {\em playing the field}, has been extensively studied in the ecology discipline through the theory of {\em Ideal Free Distribution (IFD)}, see reviews in \cite{IFD-review,IFD-review2}. Introduced by Fretwell and Lucas \cite{FL69}, the standard model assumes that each individual follows the same distribution $p(\cdot)$ (i.e., a symmetric strategy) and hence the fraction of the population that is expected to occupy a site $x$ is $p(x)$. 

An IFD is a probability distribution $p(\cdot)$  in which every site which is explored with positive
probability yields the same payoff, and every other site yields a lower payoff if explored. 
More precisely, as before, let $I(x,\ell)$ denote the payoff received by an individual selecting a site $x$ together with $\ell-1$ other individuals. 
Fix a player and let $P(x,\ell) = \binom{k-1}{\ell}
p(x)^{\ell}(1-p(x))^{k-\ell-1}$ be the probability that among 
$k-1$  players, $\ell$ of them selected site $x$. The 
  $value_p$ of a site $x$ corresponds to the expected gain for
exploring site $x$ and is defined as:
\begin{equation}\label{p-value}
  \nu_ p(x):=  \sum_{\ell=1}^{k}  I(x,\ell) \cdot P(x,\ell-1)
\end{equation}
By definition, the expected payoff that a player receives  is 
$\sum_x p(x)\cdot \nu_ p(x)$.
The IFD is a distribution $p$, that for some integer $W$, satisfies
the following {\em IFD conditions.} 
\begin{itemize}
\item $p(x)>0 \iff x\leq W$
\item
$\nu_ {p}(1)=\nu_ {p}(2)=\ldots= \nu_ {p}(W)=\nu_ {p}$, ~and for all $W+1\leq x$, we have $\nu_ {p}>\nu_ {p}(x)$
\end{itemize}

\noindent The following observation is well-known, see e.g., \cite{FL69, CK10}. For completeness, a sketch of the proof is provided in Appendix \ref{app:obs}. 
\begin{obs}\label{obs}
Assume that $I(x,\cdot)$ is non-increasing for any $x$ (note that this is indeed the case for congestion policies). There exists a unique strategy satisfying the IFD conditions. Moreover,
this IFD is  the only symmetric Nash Equilibrium of the game.
\end{obs}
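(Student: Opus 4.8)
The key preliminary observation is that, for a fixed player, $\nu_p(x)$ depends on the strategy $p$ only through the single number $p(x)$: setting
\[
  g_x(q)\ :=\ \sum_{\ell=1}^{k} I(x,\ell)\binom{k-1}{\ell-1}q^{\ell-1}(1-q)^{k-\ell}\ =\ \mathbb{E}\!\left[\,I\big(x,\,1+B_q\big)\,\right],\qquad B_q\sim\mathrm{Bin}(k-1,q),
\]
we have $\nu_p(x)=g_x(p(x))$. Each $g_x$ is a polynomial, hence continuous on $[0,1]$; since $B_q$ is stochastically increasing in $q$ and $I(x,\cdot)$ is non-increasing, $g_x$ is non-increasing, with $g_x(0)=I(x,1)$, which for a congestion policy equals $f(x)$, so $g_1(0)\ge g_2(0)\ge\cdots$ follows from the ordering of $f$. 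The whole statement thereby reduces to a question about this finite family of one-dimensional monotone functions.

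\emph{Existence.} I would construct the IFD by a ``water-filling'' argument parametrised by the putative common value $\lambda$. For each $x$, let $q_x(\lambda)\in[0,1]$ be $0$ if $\lambda\ge g_x(0)$, be $1$ if $\lambda\le g_x(1)$, and otherwise a solution of $g_x(q)=\lambda$; the clipping makes $S(\lambda):=\sum_x q_x(\lambda)$ well defined, continuous, and non-increasing in $\lambda$, with $S(\lambda)=0$ for $\lambda\ge f(1)$ and $S(\lambda)=M$ once $\lambda$ is small enough. By the intermediate value theorem there is $\lambda^\star$ with $S(\lambda^\star)=1$; put $p(x)=q_x(\lambda^\star)$. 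Because $g_x(0)=f(x)$ is non-increasing in $x$, the support $\{x:p(x)>0\}$ is an initial segment $\{1,\dots,W\}$, each such site has $\nu_p(x)=\lambda^\star$, and each $x>W$ has $\nu_p(x)=f(x)\le\lambda^\star$; barring a degenerate coincidence $f(x)=\lambda^\star$ (addressed below), this gives the IFD conditions.

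\emph{Uniqueness and the equilibrium characterisation.} If $p,p'$ are IFDs with supports $\{1,\dots,W\},\{1,\dots,W'\}$ and common values $\nu\le\nu'$, then no index $x$ with $W<x\le W'$ can exist, since it would force $\nu'=g_x(p'(x))\le g_x(0)=f(x)<\nu\le\nu'$ (the middle strict inequality being the IFD condition for $x>W$); hence $W'\le W$, and for $x\le W'$ the relation $g_x(p'(x))=\nu'\ge\nu=g_x(p(x))$ with monotonicity gives $p'(x)\le p(x)$, so summing over $x\le W'$ forces equality throughout and $p=p'$. For the link with Nash equilibria: if $p$ is an IFD with value $\nu$, a player using $p$ against $p$ earns $\sum_x p(x)\nu_p(x)=\nu$, while any deviation $p'$ earns $\sum_x p'(x)\nu_p(x)\le\nu$ since $\nu_p(x)\le\nu$ everywhere, so $p$ is a symmetric equilibrium. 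Conversely, at a symmetric equilibrium $p$ with payoff $u=\sum_x p(x)\nu_p(x)$, deviating to the pure strategy ``always $y$'' shows $\nu_p(y)\le u$ for all $y$; as $u$ is a $p$-weighted average of quantities all $\le u$, every $x$ in the support has $\nu_p(x)=u$. It then remains to check that the support is an initial segment and that $\nu_p(x)<u$ for $x$ outside it: here I would use strict monotonicity of $g_x$ on sites with $f(x)>0$ (valid for strictly decreasing congestion functions), so that $p(y)>0$ on the support forces $u=g_y(p(y))<g_y(0)=f(y)$, which---combined with the ordering of $f$---excludes an unused site of value $\ge u$ and pins the support down to $\{1,\dots,W\}$ with $f(x)<u$ for $x>W$. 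Hence every symmetric equilibrium is an IFD, and by the uniqueness above there is exactly one.

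\emph{Main obstacle.} The routine parts are the reduction to the $g_x$'s and the ``IFD $\Rightarrow$ equilibrium'' direction. The delicate point is combining uniqueness with the strict inequality in the IFD conditions: one must rule out ``boundary'' equilibria in which an unused site carries value exactly $u$. Treating this cleanly requires either strict monotonicity of the relevant $g_x$ (equivalently, strictly decreasing congestion together with positive site values) or an explicit tie-breaking convention, and it is also the step where the global ordering $f(1)\ge f(2)\ge\cdots$ is indispensable, as it is what makes the support an initial segment.
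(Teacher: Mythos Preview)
Your approach is essentially the same as the paper's: both rest on the observation that $\nu_p(x)$ depends on $p$ only through $p(x)$ and is monotone in that single variable, and both obtain uniqueness by comparing the common values of two putative solutions and summing the resulting pointwise inequalities over the support to reach a contradiction. The paper's appendix is much terser---it asserts in one line that $\nu_p(x)$ is \emph{strictly} decreasing in $p(x)$, gives existence by a sentence, and runs the summation argument directly on two symmetric Nash equilibria---whereas you supply an explicit water-filling construction for existence and split the argument into ``IFD uniqueness'' plus ``symmetric NE $\Leftrightarrow$ IFD''. These are organisational rather than substantive differences.

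Your ``main obstacle'' paragraph is well placed and in fact sharper than the paper's sketch on this point: the passage from ``$I(x,\cdot)$ non-increasing'' to ``$g_x$ strictly decreasing'' is exactly where an extra hypothesis is needed (e.g.\ $C$ not identically~$1$, together with $f(x)>0$), and the paper simply asserts strict monotonicity without comment. You are right that without it one can have multiple IFDs (take $C\equiv 1$ and $f(1)=f(2)$), so the strict-inequality clause in the IFD definition and the uniqueness claim both require this refinement.
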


\subsection{Evolutionary Stable Strategy (ESS)}\label{sec:ESS}
In addition to the notion of Nash Equilibrium, we shall adopt the
stronger concept of an {\em Evolutionary Stable Strategy (ESS)} \cite{nature,BCV97, BM04, GT10,Linda, broom2013game}.
This concept has
become a cornerstone of evolutionary game theory, and has been 
 extensively used to   
study the evolution of animal and human behavior \cite{Linda, broom2013game}. 
Informally, an ESS  is a strategy $\sigma$ such that, in a population composed of a
majority of $\sigma$-strategists and a minority of ``mutants'' playing
strategy $\pi$, the $\sigma$-strategists have a higher payoff,
provided that the proportion of mutants is not too high.

The standard view of ESS is to consider an infinite population
  whose individuals play repeatedly against each other in pairs,
  chosen uniformly at random. In our setting, individuals play in
  groups of $k$, and hence a generalization is required.
One possible generalization is to consider a fixed population of $k$ individuals that  play the game between themselves. Unfortunately, generalizing ESS to finite populations is quite challenging, and the known generalizations are often too restrictive to be applicable. Instead, we chose the generalization in which the population remains infinite and players are randomly matched in $k$-tuples, for any fixed $k\geq 2$, see  e.g., the book \cite{broom2013game}. Modeling biological settings, this generalization may correspond, for example, to scenarios in which a large colony of bats breaks on a daily basis into smaller groups, each of which forages in a different field of patched resources, see \cite{follow,Dechmann2721}.

Formally, let $\mathcal{E}(\rho;\sigma^{\ell},\pi^{k-\ell-1})$ denote
the payoff received by an individual playing strategy $\rho$ against
$\ell$ players playing strategy $\sigma$ and $(k-\ell-1)$ players
playing strategy $\pi$. When $\sigma = \pi$, we use the abridged
notation $\mathcal{E}(\rho;\sigma^{k-1})$. Consider a population in
which a proportion $(1-\epsilon)$ of it are playing strategy $\sigma$
and a proportion of $\epsilon$ are playing $\pi$. The average payoff
of a player playing $\rho$ matched against $k-1$ opponents drawn
uniformly at random from this population is denoted by
$\mathcal{U}[\rho;(1-\epsilon)\sigma+\epsilon \pi]$, and is equal to
\begin{equation}
  \mathcal{U}[\rho;(1-\epsilon)\sigma+\epsilon \pi] =
  \sum_{\ell = 0}^{k-1} \binom{k-1}{\ell}(1-\epsilon)^\ell
  \epsilon^{k-\ell-1} \mathcal{E}(\rho;\sigma^{\ell},\pi^{k-\ell-1})
\end{equation}
A strategy $\sigma$ is an ESS if for every other strategy $\pi$, there
exist $\epsilon_{\pi} > 0$ such that for every
$\epsilon \leq \epsilon_{\pi}$, 
$ \mathcal{U}[\sigma, (1-\epsilon)\sigma+\epsilon \pi] > \mathcal{U}[\pi,
    (1-\epsilon)\sigma+\epsilon \pi]$, 
i.e.,  $\sigma$ cannot be ``invaded'' by $\pi$, as long as the proportion of
$\pi$-strategists remains below $\epsilon_{\pi}$.

We also recall the following {\em characterization of ESS} \cite{broom2013game}: a strategy $\sigma$ is an ESS if for every
other strategy $\pi$, there is an integer  $0\leq m_{\pi} \leq k-1$, such that
both conditions below hold:
\begin{itemize}
\item $\mathcal{E}(\sigma;\sigma^{k-m_\pi-1}, \pi^{m_\pi}) > \mathcal{E}(\pi;\sigma^{k-m_\pi-1}, \pi^{m_\pi})$
\item $\forall \ell,~ 0\leq \ell < m_\pi, \mbox{~we have}~\mathcal{E}(\sigma;\sigma^{k-\ell-1}, \pi^\ell) = \mathcal{E}(\pi;\sigma^{k-\ell-1}, \pi^\ell)$
\end{itemize}
It should be clear from this definition that every ESS is a symmetric Nash
equilibrium of the $k$-player game. Indeed, these conditions ensure
that $\mathcal{E}(\sigma; \sigma^{k-1}) \geq \mathcal{E}(\pi;
\sigma^{k-1})$ for every $\pi \neq \sigma$.
Further details about the ESS notion appear in Section \ref{sec:conclusions}.

\subsection{Our Results}
We study the interplay between the congestions and the expected coverage of strategies at equilibrium.   
Our main takeaway message is that optimal coverage is expected to emerge by a congestion policy which is 
defined by a ``Judgment of Solomon'' type of rule: If a single player attempts to exploit a  site then it gains the full reward $f(x)$, but when several players attempt to exploit it, they all receive nothing. 
Formally, define the {\em exclusive congestion function} as follows: 
\begin{align*}
C_{exc}(\ell) =
  \begin{cases}
    1 &\quad \text{if } \ell=1\\
    0 &\quad \text{otherwise}
  \end{cases} 
  \end{align*}
and define the
{\em exclusive reward policy} as \[I_{exc}(\ell,x) =f(x)\cdot C_{exc}(\ell)\]
We identify $I_{exc}$ as a critical reward policy, in the sense that 
its IFD yields optimal coverage, while any other congestion policy yields
strictly worse coverage.

More formally, let $\sigma^{\star}$ be the IFD for the exclusive reward policy.
We first claim that under this policy, $\sigma^{\star}$ is not only a Nash Equilibrium but also an ESS.
 
  \begin{thm}\label{thm-ESS}
 $\sigma^{\star}$ is an ESS under $I_{exc}$.  
\end{thm}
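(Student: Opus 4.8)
The plan is to verify the ESS definition directly, exploiting a simplification special to the exclusive policy. If a focal player faces $k-1$ opponents, each of whom independently plays $\sigma^{\star}$ with probability $1-\epsilon$ and $\pi$ with probability $\epsilon$, then each opponent independently picks site $x$ with probability $q_\epsilon(x):=(1-\epsilon)\sigma^{\star}(x)+\epsilon\pi(x)$. Since under $I_{exc}$ a player playing $\rho$ collects $f(x)$ exactly when it is alone on the site $x$ it chose, this yields the clean formula
\[
  \mathcal{U}[\rho;(1-\epsilon)\sigma^{\star}+\epsilon\pi]=\sum_x \rho(x)\,f(x)\,\bigl(1-q_\epsilon(x)\bigr)^{k-1}.
\]
Writing $\delta(x):=\sigma^{\star}(x)-\pi(x)$ (so $\sum_x\delta(x)=0$ and $1-q_\epsilon(x)=(1-\sigma^{\star}(x))+\epsilon\,\delta(x)$), showing that $\sigma^{\star}$ is an ESS amounts to proving that
\[
  \Phi(\epsilon):=\mathcal{U}[\sigma^{\star};(1-\epsilon)\sigma^{\star}+\epsilon\pi]-\mathcal{U}[\pi;(1-\epsilon)\sigma^{\star}+\epsilon\pi]=\sum_x \delta(x)\,f(x)\,\bigl((1-\sigma^{\star}(x))+\epsilon\,\delta(x)\bigr)^{k-1}>0
\]
for every $\pi\neq\sigma^{\star}$ and every sufficiently small $\epsilon>0$. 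As $\Phi$ is a polynomial in $\epsilon$, it suffices to control its value and, when needed, its first derivative at $0$.

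First I would evaluate $\Phi(0)=\sum_x\delta(x)\,\nu_{\sigma^{\star}}(x)$, where $\nu_{\sigma^{\star}}(x)=f(x)(1-\sigma^{\star}(x))^{k-1}$ is the site value under $I_{exc}$ (this is the $\ell=1$ term of \eqref{p-value}). Plugging in the IFD conditions from Observation~\ref{obs} — namely $\nu_{\sigma^{\star}}(x)=\nu_{\sigma^{\star}}$ on the support $[1,W]$ of $\sigma^{\star}$ and $\nu_{\sigma^{\star}}(x)<\nu_{\sigma^{\star}}$ for $x>W$ — together with $\sum_x\delta(x)=0$, a short computation collapses this to $\Phi(0)=\sum_{x>W}\pi(x)\bigl(\nu_{\sigma^{\star}}-\nu_{\sigma^{\star}}(x)\bigr)\ge 0$, with equality if and only if $\pi$ is supported inside $[1,W]$. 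This gives a two-case split. If $\pi$ puts positive probability on some site outside $[1,W]$, then $\Phi(0)>0$, and since $\Phi$ is continuous, $\Phi(\epsilon)>0$ for all small $\epsilon\ge 0$, which already settles this case.

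The remaining case is $\pi\neq\sigma^{\star}$ with $\pi$ supported inside $[1,W]$, where $\Phi(0)=0$ and I must turn to the first-order term: differentiating gives $\Phi'(0)=(k-1)\sum_x\delta(x)^2\,f(x)\,(1-\sigma^{\star}(x))^{k-2}$. Here $\delta(x)=0$ for $x>W$, while for $x\le W$ we have $f(x)>0$ and $\sigma^{\star}(x)<1$ (if $\sigma^{\star}(x)=1$ then $W=1$, forcing $\pi=\sigma^{\star}$, a contradiction — and in fact $W=1$ is impossible under $I_{exc}$ for positive $f$); moreover, since $\sigma^{\star}$ and $\pi$ are distinct distributions both supported in $[1,W]$, some $x$ has $\delta(x)\neq 0$. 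Hence $\Phi'(0)>0$, so $\Phi(\epsilon)=\epsilon\,\Phi'(0)+O(\epsilon^2)>0$ for all sufficiently small $\epsilon>0$, finishing the proof. I expect the only real (and fairly mild) obstacle to be the degenerate case $\Phi(0)=0$: one must recognize that passing to $\Phi'(0)$ is exactly what is needed — equivalently, that this corresponds to the two-condition ESS characterization with $m_\pi=1$ (equality at $\ell=0$, strict inequality at $\ell=1$). Everything else is a direct consequence of the linearization $\mathcal{U}[\rho;\cdots]=\sum_x\rho(x)f(x)(1-q_\epsilon(x))^{k-1}$ and the IFD conditions.
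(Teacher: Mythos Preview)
Your argument is correct and takes a genuinely different, more elementary route than the paper. The paper works with the two-condition ESS characterization and, in the hard case where $\pi$ is supported in $[1,W]$, expands both $\mathcal{E}_k(\sigma^{\star};\pi^\ell,\sigma^{\star(k-\ell-1)})$ and $\mathcal{E}_k(\pi;\pi^\ell,\sigma^{\star(k-\ell-1)})$ using the explicit closed form of $\sigma^{\star}$, reduces the desired inequality to one that is established by two applications of H\"older's inequality, and then invokes the equality case of H\"older to obtain strictness. You instead exploit the fact, special to $I_{exc}$, that the average payoff against the $(1-\epsilon,\epsilon)$-mixed population collapses via the binomial theorem to $\sum_x \rho(x)\,f(x)\,(1-q_\epsilon(x))^{k-1}$; hence the ESS gap $\Phi(\epsilon)$ is a single explicit polynomial in $\epsilon$ whose constant term is handled by the IFD conditions and whose linear coefficient is the weighted sum of squares $(k-1)\sum_x \delta(x)^2\,f(x)\,(1-\sigma^{\star}(x))^{k-2}$, manifestly positive. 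This bypasses H\"older entirely. The paper's route buys a little more --- it establishes the strict inequality for every $1\le \ell\le k-2$, not just $\ell=1$ --- but that extra strength is not needed for the theorem; your approach is shorter and makes the underlying mechanism (positive-definiteness of the first-order form in $\delta$) transparent. One small cosmetic point: your parenthetical ``$W=1$ is impossible under $I_{exc}$ for positive $f$'' tacitly assumes $M\ge 2$; when $M=1$ the statement is vacuous anyway.
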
 

\noindent Next, we claim that $\sigma^{\star}$ yields the best possible coverage, among all symmetric strategies.
  \begin{thm}
  \label{thm-optimum}
 For any strategy $\sigma$, ${\Cover}(\sigma^{\star})\geq {\Cover}(\sigma)$, with equality if and only if $\sigma=\sigma^{\star}$.
 \end{thm}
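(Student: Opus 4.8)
The plan is to recognize $\Cover$ as a strictly concave functional on the probability simplex and to observe that the IFD conditions for $I_{exc}$ are exactly its first-order optimality conditions; uniqueness of the maximizer then yields both the inequality and the equality case. Throughout we may assume $f(x)>0$ for every $x$, since sites with $f(x)=0$ contribute nothing to the coverage and carry no probability mass at an optimum.

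First I would write $\Cover(p)=\sum_{x=1}^{M} f(x)\,g(p(x))$ with $g(t)=1-(1-t)^k$. Since $g'(t)=k(1-t)^{k-1}$ and $g''(t)=-k(k-1)(1-t)^{k-2}<0$ on $[0,1)$, the function $g$ is strictly concave on $[0,1]$; as each $f(x)>0$, the map $p\mapsto\Cover(p)$ is strictly concave on the compact convex simplex $\Delta=\{p\ge 0:\sum_x p(x)=1\}$, hence maximized at a unique point. It then remains only to identify that point with $\sigma^\star$ (it is the maximizer $p^\star$ introduced just before Observation~\ref{obs-pstar}).

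Next I would characterize the maximizer by a local-exchange (KKT) argument. If $p$ maximizes $\Cover$ and $p(x),p(y)>0$, transferring an infinitesimal amount of probability between $x$ and $y$ shows $f(x)g'(p(x))=f(y)g'(p(y))$; writing this common value as $k\nu$, we get $f(x)(1-p(x))^{k-1}=\nu$ for every $x$ in the support of $p$. Moving a small amount of probability from a supported site onto a site $z$ with $p(z)=0$ cannot increase $\Cover$, which gives $f(z)g'(0)\le k\nu$, i.e.\ $f(z)\le\nu$. Conversely, by concavity these two conditions are sufficient: if $p$ satisfies them then, termwise, $f(x)g'(p(x))(q(x)-p(x))\le k\nu\,(q(x)-p(x))$ for every $x$ (equality on the support, and $f(x)k\,q(x)\le k\nu\,q(x)$ off it), so $\sum_x f(x)g'(p(x))(q(x)-p(x))\le k\nu\bigl(\sum_x q(x)-\sum_x p(x)\bigr)=0$ for every $q\in\Delta$, and $p$ is the global maximizer.

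Finally I would check that $\sigma^\star$ satisfies these conditions. For the exclusive policy only the $\ell=1$ term survives in \eqref{p-value}, so $\nu_p(x)=f(x)\,P(x,0)=f(x)(1-p(x))^{k-1}$; in particular $\nu_{\sigma^\star}(x)=f(x)$ whenever $\sigma^\star(x)=0$. Applying Observation~\ref{obs} (valid since $C_{exc}$ is non-increasing), the IFD conditions say precisely that $f(x)(1-\sigma^\star(x))^{k-1}$ equals a common value $\nu^\star$ on the support of $\sigma^\star$ and that $f(x)<\nu^\star$ off the support — exactly the optimality conditions above with $\nu=\nu^\star$. Hence $\sigma^\star$ is the unique maximizer of $\Cover$ over $\Delta$, which proves $\Cover(\sigma^\star)\ge\Cover(\sigma)$ for every $\sigma$, with equality forcing $\sigma=\sigma^\star$ by strict concavity. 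The step needing the most care is the boundary of the simplex: the exchange argument must be run one-sidedly there, and one must verify that concavity genuinely turns the resulting equality-on-support / inequality-off-support split into a global-optimality certificate; the remaining computations are routine.
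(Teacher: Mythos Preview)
Your argument is correct. Both your proof and the paper's rest on the same core observation: the IFD conditions for $I_{exc}$ (equal values $f(x)(1-p(x))^{k-1}$ on the support, strictly smaller $f(x)$ off it) are precisely the first-order optimality conditions for $\Cover$ on the simplex. The packaging differs. The paper argues by bare-hands contradiction: it posits a minimizer $\sigma_{\min}$ of $\mathcal{T}(p)=\sum_x f(x)(1-p(x))^k$ distinct from $\sigma^\star$, locates sites $x_0,x_1$ where the two disagree in opposite directions, shifts $\epsilon$ mass from $x_1$ to $x_0$, and uses a Taylor expansion together with the IFD inequalities for $\sigma^\star$ to obtain a strict improvement. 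You instead name the structure explicitly --- strict concavity of $\Cover$, hence a unique maximizer, together with the KKT characterization and its sufficiency via the concave tangent-line inequality --- and then read off that $\sigma^\star$ satisfies KKT from Claim~\ref{thm-IFD}. Your route is cleaner and makes the uniqueness immediate from strict concavity, whereas the paper extracts uniqueness from the same exchange step; conversely, the paper's version is self-contained and avoids invoking any optimization formalism. One small citation point: what you need is that $\sigma^\star$ actually satisfies the IFD conditions, which is Claim~\ref{thm-IFD} rather than Observation~\ref{obs}.
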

\noindent Since the IFD is the only symmetric Nash Equilibrium under
$I_{exc}$, the price of anarchy equals 1.
 \begin{corollary}
  $\SPoA(C_{exc})=1$.
 \end{corollary}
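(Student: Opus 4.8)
The plan is to reduce the corollary entirely to the two theorems just stated together with the uniqueness part of Observation~\ref{obs}; Theorem~\ref{thm-ESS} is not actually needed here, only for the stronger ESS claim. Since $\SPoA(C)\ge 1$ holds for every congestion function $C$ by definition, the whole content is the matching upper bound $\SPoA(C_{exc})\le 1$, i.e.\ to show that $\SPoA(C_{exc},f)\le 1$ for every $M\in\Natural$ and every $f:[M]\to\RR^+$.

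First I would check that $I_{exc}$ satisfies the hypothesis of Observation~\ref{obs}: we have $C_{exc}(1)=1$ and $C_{exc}$ is non-increasing, so $I_{exc}(x,\cdot)=f(x)\cdot C_{exc}(\cdot)$ is non-increasing for each fixed $x$ (using $f(x)\ge 0$). Hence Observation~\ref{obs} applies, and the game under $I_{exc}$ has exactly one symmetric Nash equilibrium, namely the IFD $\sigma^{\star}$. In the notation of the $\SPoA$ definition this reads $\cP(C_{exc},f)=\{\sigma^{\star}\}$, so the supremum defining $\SPoA(C_{exc},f)$ collapses to the single term $\Cover(p^{\star})/\Cover(\sigma^{\star})$.

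It then remains to argue $\Cover(p^{\star})=\Cover(\sigma^{\star})$. By Theorem~\ref{thm-optimum}, $\sigma^{\star}$ maximizes $\Cover(\cdot)$ over all symmetric strategies, so $\Cover(\sigma^{\star})\ge\Cover(p^{\star})$; conversely $p^{\star}$ is by definition a maximizer of $\Cover(\cdot)$, so $\Cover(p^{\star})\ge\Cover(\sigma^{\star})$. (The equality clause of Theorem~\ref{thm-optimum} in fact forces $p^{\star}=\sigma^{\star}$, but only the equality of values is needed.) Therefore $\SPoA(C_{exc},f)=1$ for every $M$ and $f$, and taking the supremum over all $M\in\Natural$ and all $f:[M]\to\RR^+$ yields $\SPoA(C_{exc})=1$.

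There is essentially no obstacle beyond Theorem~\ref{thm-optimum} itself, in which all the real work sits. The only points needing a line of care are verifying the monotonicity hypothesis of Observation~\ref{obs} for $I_{exc}$ and noting that $\Cover(\sigma^{\star})>0$, so that the ratio is well defined --- which holds because the IFD explores site~$1$ with positive probability and $f$ is positive, giving $\Cover(\sigma^{\star})\ge f(1)\bigl(1-(1-\sigma^{\star}(1))^{k}\bigr)>0$.
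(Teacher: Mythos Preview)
Your proposal is correct and follows exactly the paper's reasoning: the paper deduces the corollary in a single sentence (``Since the IFD is the only symmetric Nash Equilibrium under $I_{exc}$, the price of anarchy equals~1''), and you have simply unpacked that sentence, checking the monotonicity hypothesis of Observation~\ref{obs} and the well-definedness of the ratio. There is no substantive difference in approach.
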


\noindent 
The criticality of $I_{exc}$ follows by the fact that among all congestion policies, it is the only one whose IFD yields optimal coverage for all value functions $f$. 
 
 \begin{thm}\label{thm-critical}
   For any congestion function $C\neq C_{exc}$, we have $\SPoA(C)>1$.
\end{thm}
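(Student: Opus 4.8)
\medskip
\noindent\textbf{Proof proposal.}
My plan is to show that the exclusive policy is singled out among congestion policies by the identity \emph{``IFD $=$ coverage optimum''}, and to exploit any deviation from it. Given $C\neq C_{exc}$ I will construct a value function $f$ on $M=3$ sites for which the (unique) IFD under $C$ differs from the coverage-optimal strategy $p^\star$; since by Theorem~\ref{thm-optimum} the coverage is \emph{uniquely} maximised at $p^\star$, the IFD $p$ then satisfies $\Cover(p)<\Cover(p^\star)$, so $\SPoA(C)\ge\SPoA(C,f)\ge\Cover(p^\star)/\Cover(p)>1$.

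The first step is to rewrite the $value$ of \eqref{p-value}. Substituting $I(x,\ell)=f(x)C(\ell)$ and reindexing gives $\nu_p(x)=f(x)\,g(p(x))$ with
\[
  g(q):=\sum_{j=0}^{k-1}C(j+1)\binom{k-1}{j}q^{j}(1-q)^{k-1-j}=\Exp{C\bigl(1+\mathrm{Bin}(k-1,q)\bigr)},
\]
so $g(0)=C(1)=1$, and for the exclusive policy $g_{exc}(q)=(1-q)^{k-1}$. I will work with the rational function $\psi(q):=g(q)/(1-q)^{k-1}$ on $[0,1)$ (which has no pole there), with $\psi(0)=1$. Because the degree-$(k-1)$ Bernstein polynomials $\binom{k-1}{j}q^{j}(1-q)^{k-1-j}$ are linearly independent, $g\equiv g_{exc}$ holds iff $C(1)=1$ and $C(\ell)=0$ for $2\le\ell\le k$; since $g$ depends only on $C(1),\dots,C(k)$, the hypothesis $C\neq C_{exc}$ means exactly that this fails, i.e.\ $\psi$ is a \emph{non-constant} rational function, hence non-constant on every subinterval of $[0,1)$, in particular on any neighbourhood of $1/3$.

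Now for the gadget. I choose $s>0$ small enough that $\psi(1/3+s),\psi(1/3),\psi(1/3-s)$ are not all equal --- such $s$ exists, for otherwise $\psi$ would be constant on a neighbourhood of $1/3$, hence constant, hence $\equiv\psi(0)=1$, i.e.\ $C=C_{exc}$. Set $\pi:=(1/3+s,\,1/3,\,1/3-s)$ and $f(x):=(1-\pi_x)^{-(k-1)}$; this $f$ is positive and non-increasing, as required. Each coordinate map $t\mapsto f(x)\bigl(1-(1-t)^k\bigr)$ is strictly concave on $[0,1)$, so $\Cover$ is strictly concave there, and $\nabla\Cover(\pi)=k\bigl(f(x)(1-\pi_x)^{k-1}\bigr)_x=k\mathbf 1$ is normal to the simplex; hence the KKT conditions identify $\pi$ as the (by Theorem~\ref{thm-optimum} unique) coverage maximiser, i.e.\ $p^\star=\pi$ for this $f$. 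But $\pi$ is \emph{not} the IFD under $C$: were it, then since all $\pi_x>0$ the IFD conditions would force $\nu_\pi(1)=\nu_\pi(2)=\nu_\pi(3)$, i.e.\ $f(x)g(\pi_x)=\psi(\pi_x)$ would be independent of $x$ --- contradicting the choice of $s$. By Observation~\ref{obs} the IFD is the \emph{unique} symmetric Nash equilibrium under $(C,f)$, so the actual IFD $p$ satisfies $p\neq\pi=p^\star$, and the claim follows as in the first paragraph.

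The step I expect to be the crux is the very first reduction: one must notice that the coverage-optimality KKT condition ``$f(x)(1-p(x))^{k-1}$ is constant on the support'' is \emph{literally} the IFD condition for $C_{exc}$, and then quantify over $f$ cleverly enough that the (possibly tiny) algebraic discrepancy between $g$ and $(1-q)^{k-1}$ is actually witnessed on the support of some optimal strategy --- which is the role of the three probabilities clustered near $1/3$. The remaining ingredients (the Bernstein-independence characterisation of $C_{exc}$, the normalisation $g(0)=1$, and the concavity/KKT check that the engineered $f$ indeed has $p^\star=\pi$) are routine.
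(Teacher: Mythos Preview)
Your argument is correct and follows the same high-level plan as the paper --- show that for some value function $f$ the IFD under $C$ differs from $\sigma^\star$, then invoke Theorem~\ref{thm-optimum} --- but the witnessing construction is genuinely different. The paper takes $M\gg k$ and a slowly decreasing $f$ so that $W\ge 2k$, then observes that the IFD condition $\nu_{\sigma^\star}(x)=\nu_{\sigma^\star}(W)$ becomes a polynomial equation in $f(x)^{1/(k-1)}$ of degree at most $k-1$; since $C\neq C_{exc}$ forces this polynomial to be non-constant, it cannot be satisfied at $2k$ distinct values, a contradiction. You instead work on only $M=3$ sites: you engineer $f$ so that $\sigma^\star=(1/3+s,1/3,1/3-s)$, and then the IFD equal-value condition reduces to $\psi(1/3+s)=\psi(1/3)=\psi(1/3-s)$ for the rational function $\psi(q)=g(q)/(1-q)^{k-1}$, which fails for a suitably chosen small $s$ because $\psi$ is non-constant.

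The shared algebraic core is the identification of $C_{exc}$ with the identity $g(q)=(1-q)^{k-1}$ (via Bernstein linear independence); what differs is how the non-constancy is exploited. Your three-site gadget is more economical and entirely constructive, and the KKT verification that $p^\star=\pi$ is clean (it is, of course, just a direct check of the closed form of $\sigma^\star$). The paper's many-site construction, by contrast, shows that the failure is in some sense generic: \emph{any} slowly decreasing $f$ on enough sites will do. Both routes rely equally on Observation~\ref{obs} and on the uniqueness clause of Theorem~\ref{thm-optimum}, and both tacitly read ``$C\neq C_{exc}$'' as disagreement on $\{1,\dots,k\}$, which you make explicit.
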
 
\noindent 
We find the fact that $\SPoA(C_{exc})=1$ rather surprising. Indeed, although it appears intuitive that increasing the competition will result in better coverage, the exclusive policy $C_{exc}$ is, in fact, not the one with highest levels of competition. Indeed, one could define a congestion policy where in case of conflict, each of the colliding players receives a negative payoff (i.e., $C(\ell)$ being negative, see Section \ref{sec:dispersal}). This means that the competition level could significantly exceed the level of $C_{exc}$. The fact that the best coverage occurs exactly at that level is thus quite surprising. A second important factor to note is that even if one identifies the best policy (i.e., the one for which the $\SPoA$ is smallest), it is a priori unclear that the resulted coverage would actually be optimal among all symmetric strategies (including the non-competing ones), i.e., that $\SPoA = 1$.

An informal discussion regarding the implications of our results in the context of animal behavior is presented in Section \ref{sec:discussion}.

See Figure \ref{fig:coverage}  for an illustration of
the coverage as a function of the competition extent, in specific
instances of 2 players and 2 sites.

\begin{figure}[h]
    \centering
    \begin{subfigure}[b]{0.4\textwidth}
      \begingroup
      \makeatletter
      \providecommand\color[2][]{%
        \GenericError{(gnuplot) \space\space\space\@spaces}{%
          Package color not loaded in conjunction with
          terminal option `colourtext'%
        }{See the gnuplot documentation for explanation.%
        }{Either use 'blacktext' in gnuplot or load the package
          color.sty in LaTeX.}%
        \renewcommand\color[2][]{}%
      }%
      \providecommand\includegraphics[2][]{%
        \GenericError{(gnuplot) \space\space\space\@spaces}{%
          Package graphicx or graphics not loaded%
        }{See the gnuplot documentation for explanation.%
        }{The gnuplot epslatex terminal needs graphicx.sty or graphics.sty.}%
        \renewcommand\includegraphics[2][]{}%
      }%
      \providecommand\rotatebox[2]{#2}%
      \@ifundefined{ifGPcolor}{%
        \newif\ifGPcolor
        \GPcolortrue
      }{}%
      \@ifundefined{ifGPblacktext}{%
        \newif\ifGPblacktext
        \GPblacktexttrue
      }{}%
      \let\gplgaddtomacro\g@addto@macro
      \gdef\gplbacktext{}%
      \gdef\gplfronttext{}%
      \makeatother
      \ifGPblacktext
      \def\colorrgb#1{}%
      \def\colorgray#1{}%
      \else
      \ifGPcolor
      \def\colorrgb#1{\color[rgb]{#1}}%
      \def\colorgray#1{\color[gray]{#1}}%
      \expandafter\def\csname LTw\endcsname{\color{white}}%
      \expandafter\def\csname LTb\endcsname{\color{black}}%
      \expandafter\def\csname LTa\endcsname{\color{black}}%
      \expandafter\def\csname LT0\endcsname{\color[rgb]{1,0,0}}%
      \expandafter\def\csname LT1\endcsname{\color[rgb]{0,1,0}}%
      \expandafter\def\csname LT2\endcsname{\color[rgb]{0,0,1}}%
      \expandafter\def\csname LT3\endcsname{\color[rgb]{1,0,1}}%
      \expandafter\def\csname LT4\endcsname{\color[rgb]{0,1,1}}%
      \expandafter\def\csname LT5\endcsname{\color[rgb]{1,1,0}}%
      \expandafter\def\csname LT6\endcsname{\color[rgb]{0,0,0}}%
      \expandafter\def\csname LT7\endcsname{\color[rgb]{1,0.3,0}}%
      \expandafter\def\csname LT8\endcsname{\color[rgb]{0.5,0.5,0.5}}%
      \else
      \def\colorrgb#1{\color{black}}%
      \def\colorgray#1{\color[gray]{#1}}%
      \expandafter\def\csname LTw\endcsname{\color{white}}%
      \expandafter\def\csname LTb\endcsname{\color{black}}%
      \expandafter\def\csname LTa\endcsname{\color{black}}%
      \expandafter\def\csname LT0\endcsname{\color{black}}%
      \expandafter\def\csname LT1\endcsname{\color{black}}%
      \expandafter\def\csname LT2\endcsname{\color{black}}%
      \expandafter\def\csname LT3\endcsname{\color{black}}%
      \expandafter\def\csname LT4\endcsname{\color{black}}%
      \expandafter\def\csname LT5\endcsname{\color{black}}%
      \expandafter\def\csname LT6\endcsname{\color{black}}%
      \expandafter\def\csname LT7\endcsname{\color{black}}%
      \expandafter\def\csname LT8\endcsname{\color{black}}%
      \fi
      \fi
      \setlength{\unitlength}{0.0500bp}%
      \ifx\gptboxheight\undefined%
      \newlength{\gptboxheight}%
      \newlength{\gptboxwidth}%
      \newsavebox{\gptboxtext}%
      \fi%
      \setlength{\fboxrule}{0.5pt}%
      \setlength{\fboxsep}{1pt}%
      \begin{picture}(4320.00,3600.00)%
        \gplgaddtomacro\gplbacktext{%
          \csname LTb\endcsname%
          \put(773,596){\makebox(0,0)[r]{\strut{}$0.9$}}%
          \put(773,999){\makebox(0,0)[r]{\strut{}$0.95$}}%
          \put(773,1401){\makebox(0,0)[r]{\strut{}$1$}}%
          \put(773,1804){\makebox(0,0)[r]{\strut{}$1.05$}}%
          \put(773,2206){\makebox(0,0)[r]{\strut{}$1.1$}}%
          \put(773,2609){\makebox(0,0)[r]{\strut{}$1.15$}}%
          \put(773,3011){\makebox(0,0)[r]{\strut{}$1.2$}}%
          \put(1207,376){\makebox(0,0){\strut{}$-0.4$}}%
          \put(1810,376){\makebox(0,0){\strut{}$-0.2$}}%
          \put(2414,376){\makebox(0,0){\strut{}$0$}}%
          \put(3018,376){\makebox(0,0){\strut{}$0.2$}}%
          \put(3621,376){\makebox(0,0){\strut{}$0.4$}}%
          \put(2073,72){\makebox(0,0)[l]{\strut{}exclusive}}%
          \put(3671,72){\makebox(0,0)[l]{\strut{}sharing}}%
        }%
        \gplgaddtomacro\gplfronttext{%
          \csname LTb\endcsname%
          \put(176,831){\rotatebox{-270}{\makebox(0,0){\strut{}Coverage}}}%
          \put(988,154){\makebox(0,0){\strut{}$c$}}%
          \put(2414,3269){\makebox(0,0){\strut{}$f(x_1)=1, f(x_2)=0.3$}}%
          \csname LTb\endcsname%
          \put(2936,2838){\makebox(0,0)[r]{\strut{}ESS}}%
          \csname LTb\endcsname%
          \put(2936,2618){\makebox(0,0)[r]{\strut{}Welfare Optimum}}%
          \csname LTb\endcsname%
          \put(2936,2398){\makebox(0,0)[r]{\strut{}Optimum Coverage}}%
        }%
        \gplbacktext
        \put(0,0){\includegraphics{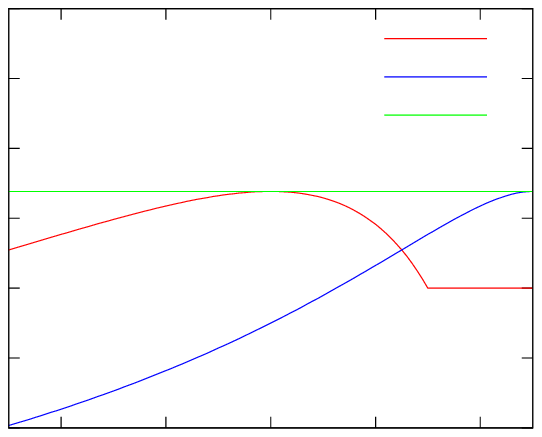}}%
        \gplfronttext
      \end{picture}%
      \endgroup
    \end{subfigure}
    ~
    \begin{subfigure}[b]{0.4\textwidth}
      \begingroup
      \makeatletter
      \providecommand\color[2][]{%
        \GenericError{(gnuplot) \space\space\space\@spaces}{%
          Package color not loaded in conjunction with
          terminal option `colourtext'%
        }{See the gnuplot documentation for explanation.%
        }{Either use 'blacktext' in gnuplot or load the package
          color.sty in LaTeX.}%
        \renewcommand\color[2][]{}%
      }%
      \providecommand\includegraphics[2][]{%
        \GenericError{(gnuplot) \space\space\space\@spaces}{%
          Package graphicx or graphics not loaded%
        }{See the gnuplot documentation for explanation.%
        }{The gnuplot epslatex terminal needs graphicx.sty or graphics.sty.}%
        \renewcommand\includegraphics[2][]{}%
      }%
      \providecommand\rotatebox[2]{#2}%
      \@ifundefined{ifGPcolor}{%
        \newif\ifGPcolor
        \GPcolortrue
      }{}%
      \@ifundefined{ifGPblacktext}{%
        \newif\ifGPblacktext
        \GPblacktexttrue
      }{}%
      \let\gplgaddtomacro\g@addto@macro
      \gdef\gplbacktext{}%
      \gdef\gplfronttext{}%
      \makeatother
      \ifGPblacktext
      \def\colorrgb#1{}%
      \def\colorgray#1{}%
      \else
      \ifGPcolor
      \def\colorrgb#1{\color[rgb]{#1}}%
      \def\colorgray#1{\color[gray]{#1}}%
      \expandafter\def\csname LTw\endcsname{\color{white}}%
      \expandafter\def\csname LTb\endcsname{\color{black}}%
      \expandafter\def\csname LTa\endcsname{\color{black}}%
      \expandafter\def\csname LT0\endcsname{\color[rgb]{1,0,0}}%
      \expandafter\def\csname LT1\endcsname{\color[rgb]{0,1,0}}%
      \expandafter\def\csname LT2\endcsname{\color[rgb]{0,0,1}}%
      \expandafter\def\csname LT3\endcsname{\color[rgb]{1,0,1}}%
      \expandafter\def\csname LT4\endcsname{\color[rgb]{0,1,1}}%
      \expandafter\def\csname LT5\endcsname{\color[rgb]{1,1,0}}%
      \expandafter\def\csname LT6\endcsname{\color[rgb]{0,0,0}}%
      \expandafter\def\csname LT7\endcsname{\color[rgb]{1,0.3,0}}%
      \expandafter\def\csname LT8\endcsname{\color[rgb]{0.5,0.5,0.5}}%
      \else
      \def\colorrgb#1{\color{black}}%
      \def\colorgray#1{\color[gray]{#1}}%
      \expandafter\def\csname LTw\endcsname{\color{white}}%
      \expandafter\def\csname LTb\endcsname{\color{black}}%
      \expandafter\def\csname LTa\endcsname{\color{black}}%
      \expandafter\def\csname LT0\endcsname{\color{black}}%
      \expandafter\def\csname LT1\endcsname{\color{black}}%
      \expandafter\def\csname LT2\endcsname{\color{black}}%
      \expandafter\def\csname LT3\endcsname{\color{black}}%
      \expandafter\def\csname LT4\endcsname{\color{black}}%
      \expandafter\def\csname LT5\endcsname{\color{black}}%
      \expandafter\def\csname LT6\endcsname{\color{black}}%
      \expandafter\def\csname LT7\endcsname{\color{black}}%
      \expandafter\def\csname LT8\endcsname{\color{black}}%
      \fi
      \fi
      \setlength{\unitlength}{0.0500bp}%
      \ifx\gptboxheight\undefined%
      \newlength{\gptboxheight}%
      \newlength{\gptboxwidth}%
      \newsavebox{\gptboxtext}%
      \fi%
      \setlength{\fboxrule}{0.5pt}%
      \setlength{\fboxsep}{1pt}%
      \begin{picture}(4320.00,3600.00)%
        \gplgaddtomacro\gplbacktext{%
          \csname LTb\endcsname%
          \put(773,596){\makebox(0,0)[r]{\strut{}$0.9$}}%
          \put(773,999){\makebox(0,0)[r]{\strut{}$0.95$}}%
          \put(773,1401){\makebox(0,0)[r]{\strut{}$1$}}%
          \put(773,1804){\makebox(0,0)[r]{\strut{}$1.05$}}%
          \put(773,2206){\makebox(0,0)[r]{\strut{}$1.1$}}%
          \put(773,2609){\makebox(0,0)[r]{\strut{}$1.15$}}%
          \put(773,3011){\makebox(0,0)[r]{\strut{}$1.2$}}%
          \put(1207,376){\makebox(0,0){\strut{}$-0.4$}}%
          \put(1810,376){\makebox(0,0){\strut{}$-0.2$}}%
          \put(2414,376){\makebox(0,0){\strut{}$0$}}%
          \put(3018,376){\makebox(0,0){\strut{}$0.2$}}%
          \put(3621,376){\makebox(0,0){\strut{}$0.4$}}%
          \put(2073,72){\makebox(0,0)[l]{\strut{}exclusive}}%
          \put(3671,72){\makebox(0,0)[l]{\strut{}sharing}}%
        }%
        \gplgaddtomacro\gplfronttext{%
          \csname LTb\endcsname%
          \put(176,831){\rotatebox{-270}{\makebox(0,0){\strut{}Coverage}}}%
          \put(988,154){\makebox(0,0){\strut{}$c$}}%
          \put(2414,3269){\makebox(0,0){\strut{}$f(x_1)=1, f(x_2)=0.5$}}%
          \csname LTb\endcsname%
          \put(2936,1209){\makebox(0,0)[r]{\strut{}ESS}}%
          \csname LTb\endcsname%
          \put(2936,989){\makebox(0,0)[r]{\strut{}Welfare Optimum}}%
          \csname LTb\endcsname%
          \put(2936,769){\makebox(0,0)[r]{\strut{}Optimum Coverage}}%
        }%
        \gplbacktext
        \put(0,0){\includegraphics{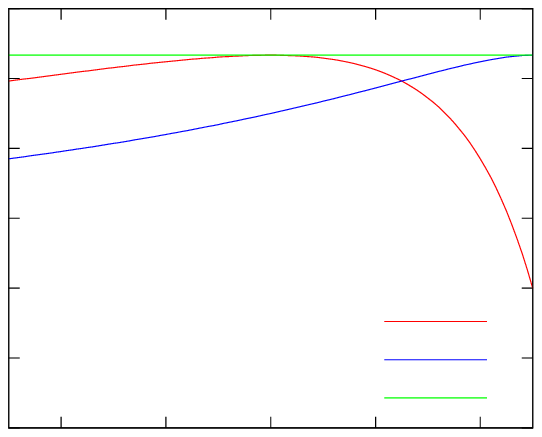}}%
        \gplfronttext
      \end{picture}%
      \endgroup
    \end{subfigure}
    \caption{Coverage as a function of the extent of competition (parametrized by $c$) for the case where two players complete over 
       two sites $x_1$ and $x_2$. On the left (respectively, right) we consider the case where $f(x_1)=1$ and $f(x_2)=0.3$ (respectively, $f(x_2)=0.5$). The $X$-axis represents a range of congestion functions $C_c$, defined as $C_c(1)=1$ and $C_c(2)=c$, where $c\in [-0.5,0.5]$. The case $c=0$ corresponds to the exclusive function and the case $c=0.5$ corresponds to the sharing function. The $Y$-axis represents the coverage. The red line corresponds to the coverage of the ESS, the green corresponds to the optimum coverage, taken over all symmetric strategies, and the blue corresponds to the coverage of the symmetric strategy that maximizes the individual payoffs.}
  \label{fig:coverage}
\end{figure}

\subsection{Related Work} \label{sec:related}
A research theme that is currently quite popular in the Operations Research, and Economic Theory research communities goes by the name ``incentivizing exploration''. Similarly to us, such papers tend to focus on the problem of designing mechanisms to coordinate the activities of independent selfish agents exploring a space of alternatives   \cite{mansour2015bayesian,frazier2014incentivizing,papanastasiou2017crowdsourcing,kremer2014implementing}.
 These papers however typically focus on sequential rather than simultaneous exploration, leading them to model the problem as a multi-armed bandit problem rather than a congestion game. For this reason, there is no real technical connection between the two lines of work, but there is nevertheless a thematic connection.

Closest to our work is the paper by Kleinberg and Oren \cite{kleinberg2011mechanisms} which considers a general model that is motivated by research foundations offering grants to incentivize researchers to work on particular topics. That paper  proposes two
  mechanisms for selecting grant policies that are expected to incentivize researchers to adopt an optimal distribution, while assuming that researchers are restricted to the sharing policy. The first
  mechanism relies on the ability to freely choose the rewards (grants) associated with sites (topics), despite the fact that the significance of a site from the perspective of the group 
  is fixed. In contexts of animals, however, the overall rewards (termed here as $f(x)$) correspond to the amount of food in patch $x$ and therefore cannot be modified. Moreover, implementing their mechanism of choosing rewards requires the knowledge of the number of players $k$, whereas  the specification of the congestion policies studied here does not require this knowledge (see Section \ref{sec:dispersal}). 
  The
  second mechanism proposed in \cite{kleinberg2011mechanisms} requires that the players receive different rewards, set in a particular way by the central entity, despite doing the same actions. This might be seen as unfair and
  unsatisfying. Moreover, implementing this would require high levels of coordination which again seem unlikely in the context of animals.

The settings of selfish routing, job scheduling, and congested games
\cite{monderer1996potential,rosenthal1973congestion} all bear
similarities to the dispersal game, however, the measurement of the
global performances considered here, namely, the coverage, is very
different from the measures studied in the former frameworks, such as
makespan or latency
\cite{makespan1,makespan2,GameTheory,makespan4}. 

Finally, many articles have informally mentioned the IFD as an example of an ESS
\cite{parker1978searching,parker1984evolutionary,pulliam1984living,pulliam1991sources,IFD-review2,morris2003shadows}. However, on a formal level, it is only  recently 
that a rigorous proof was given \cite{cressman2006migration}, and even
this proof was with respect a  limited definition of ESS (essentially,
in our notation, they proved the first item of the characterization with $m_\pi=k-1$, without considering the second item). 
Since that paper, several other works have also addressed this
question rigorously, but, to the best of our knowledge, they all
considered either weaker versions than our ESS definition, or other
contexts, such as continuous populations dynamics
\cite{quijano2007ideal,cosner2012evolutionary}.

\section{The Exclusive Reward Policy}
This section  considers the exclusive reward policy $I_{exc}$.
In Section \ref{sec:sigma} we explicitly calculate the IFD for this
policy, called $\sigma^\star$. Then, in Section \ref{sec:opt-cov}, we prove that $\sigma^{\star}$ yields  optimal coverage. 

\subsection{Algorithm $\sigma^\star$}\label{sec:sigma}
Consider a strategy $p$ that satisfies the IFD conditions for the exclusive reward policy. We know that it exists by Observation \ref{obs} and we calculate it explicitly. In the context of $C_{exc}$, the $value_p$ of a site $x$, as given by Eq. \eqref{p-value}, is:
\[\nu_ p(x)= f(x) (1-p(x))^{k-1}\]
The second IFD condition implies that for any two $x,y\in [1, W]$, we have $f(x) (1-p(x))^{k-1}=f(y) (1-p(y))^{k-1}$, 
or in other words,
\begin{equation}\label{eq-eq}
\frac{1-p(x)}{1-p(y)}=\left(\frac{f(y)}{f(x)}\right)^{1/(k-1)}
\end{equation}
A natural guess for a distribution that satisfies this is the following Pareto distribution:
\begin{align*}  &
 p(x) =
  \begin{cases}
    1 - \frac{\alpha }{f(x)^{\frac{1}{k-1}}} &\quad \text{if } x \leq W\\
    0 &\quad \text{otherwise}
  \end{cases}
\end{align*}
 Since $p$ is a distribution, we must have $\sum_{x=1}^W p(x)=1,$
from which we can extract \[\alpha =\frac{W-1}{\sum_{x \leq W} f(x)^{-\frac{1}{k-1}}}.\] 
To complete the description of our candidate IFD strategy, it remains to calculate $W$.
By the second IFD condition, we get that if $W<M$, then for every $x\leq W$:
\begin{equation}\label{eq-W+1}
f(x) (1-p(x))^{k-1}=\nu_ {p}(x)>\nu_ p(W+1)=f(W+1)
\end{equation}
Rearranging, we get 
$$p(x)<1-\left(\frac{f(W+1)}{f(x)}\right)^{\frac{1}{k-1}}.$$ 
Since $p(x)$ is a distribution whose support is $[W]:=\{1,\ldots, W\}$ we get
$$1=\sum_{x\leq W} p(x)<\sum_{x\leq W}\left(1-\left(\frac{f(W+1)}{f(x)}\right)^{\frac{1}{k-1}}\right).$$
This means that we can define $W$ as the first index that satisfies the equation above, if one exists, or $M$ otherwise. Equivalently, we can define $W$ as the largest index that satisfies
\[ \sum_{x\leq
  W}\left(1-\left(\frac{f(W)}{f(x)}\right)^{\frac{1}{k-1}}\right)\leq
1\] 
Altogether, this  leads to strategy $\sigma^{\star}$, as defined below.

\begin{pseudocode}[H]
 \begin{align*}
 \sigma^{\star}(x) =
  \begin{cases}
    1 - \frac{\alpha}{f(x)^{\frac{1}{k-1}}} &\quad \text{if } x \leq W\\
    0 &\quad \text{otherwise}
  \end{cases} 
  \end{align*}
where $W$ and the normalization factor $\alpha$ are calculated as follows.
\begin{align*}
  &W = \arg\max_{y\leq M} \left\{\sum_{x\leq y} \left(
    1-\left(\frac{f(y)}{f(x)}\right)^{\frac{1}{k-1}}\right) \leq 1\right\}\\
  &\alpha = \frac{W - 1}{\sum_{x \leq W} f(x)^{-\frac{1}{k-1}}}
\end{align*}  
\caption*{Pseudocode of Algorithm $\sigma^{\star}(x)$}
    \label{alg:sigma}
\end{pseudocode}

Interestingly, it turns out that algorithm $\sigma^\star$ is actually
identical to the first round in the algorithm $A^\star$ used in
\cite{SIROCCO} in the context of Bayesian search.

The following claim verifies that the constructed strategy is indeed the IFD. 
\begin{claim}\label{thm-IFD}
$\sigma^{\star}$ satisfies the IFD conditions under the exclusive reward policy $I_{exc}$.  
\end{claim}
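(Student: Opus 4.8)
The plan is to check the three requirements in the definition of an IFD directly against the explicit formula for $\sigma^{\star}$: (i) $\sigma^{\star}$ is a probability distribution whose support is exactly $[W]$; (ii) the site value $\nu_{\sigma^{\star}}(\cdot)$ is constant on $[W]$; and (iii) $\nu_{\sigma^{\star}}(x)$ lies strictly below that constant for every $x>W$. Throughout I would use the identity $\nu_p(x)=f(x)(1-p(x))^{k-1}$ already derived for $I_{exc}$, together with the standing assumptions that $f$ is positive and non-increasing.

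For (i), summing the formula over $x\le W$ gives $\sum_{x\le W}\sigma^{\star}(x)=W-\alpha\sum_{x\le W}f(x)^{-1/(k-1)}=W-(W-1)=1$ by the choice of $\alpha$, while $\sigma^{\star}(x)=0$ for $x>W$ by construction, so the total mass is $1$. Nonnegativity (in fact positivity) on $[W]$ is where the maximality in the definition of $W$ enters: substituting $\sum_{x\le W}f(x)^{-1/(k-1)}=(W-1)/\alpha$ into the inequality $\sum_{x\le W}(1-(f(W)/f(x))^{1/(k-1)})\le 1$ that defines $W$ rearranges it to $\alpha\le f(W)^{1/(k-1)}$; since $f(x)\ge f(W)$ for $x\le W$ this yields $0\le\alpha f(x)^{-1/(k-1)}\le 1$, i.e. $\sigma^{\star}(x)\in[0,1]$, and the inequality is strict except in a degenerate tie.

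For (ii), the Pareto form makes this a one-line computation: for $x\le W$,
\[
\nu_{\sigma^{\star}}(x)=f(x)\,(1-\sigma^{\star}(x))^{k-1}=f(x)\left(\frac{\alpha}{f(x)^{1/(k-1)}}\right)^{k-1}=\alpha^{k-1},
\]
which is independent of $x$; call this common value $\nu_{\sigma^{\star}}$. For (iii) we may assume $W<M$ (otherwise there is nothing to prove), and then $W\ge 2$ (indeed the index $y=2$ always satisfies the defining inequality when $M\ge 2$), so the division by $W-1$ below is legitimate. For $x>W$ we have $\sigma^{\star}(x)=0$, hence $\nu_{\sigma^{\star}}(x)=f(x)\le f(W+1)$, so it suffices to show $f(W+1)<\alpha^{k-1}$. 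Now $W+1$ fails the maximality inequality, and since its own summand $1-(f(W+1)/f(W+1))^{1/(k-1)}$ vanishes, that failure reads $\sum_{x\le W}(1-(f(W+1)/f(x))^{1/(k-1)})>1$, which — using $\sum_{x\le W}f(x)^{-1/(k-1)}=(W-1)/\alpha$ once more — rearranges exactly to $f(W+1)^{1/(k-1)}<\alpha$, as required.

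I expect the only genuinely delicate point to be the support claim in step (i): because the inequality defining $W$ is not strict, ties in the value function $f$ can in principle force $\sigma^{\star}(W)=0$, so that the true support is $[W-1]$; this does not affect any of the value conditions and is handled by a tie-breaking convention. Everything else amounts to a handful of algebraic rearrangements of the two inequalities characterizing $W$, so the written proof should be short.
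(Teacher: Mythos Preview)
Your proposal is correct and follows essentially the same approach as the paper: compute $\nu_{\sigma^\star}(x)=\alpha^{k-1}$ on $[W]$, and derive $f(W+1)<\alpha^{k-1}$ by rearranging the failure of the defining inequality at index $W+1$. You are in fact more careful than the paper on two points it glosses over as ``by construction'': verifying explicitly that $\sigma^\star$ is a genuine probability distribution (nonnegativity via $\alpha\le f(W)^{1/(k-1)}$), and flagging the tie-breaking edge case $\sigma^\star(W)=0$.
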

\begin{proof}
The first IFD condition is satisfied by $\sigma^\star$ by
construction. It remains to show that
$\nu_p(1)=\nu_p(2)=\ldots=\nu_p(W)$, and for all $x > W$, we have
$\nu_p(W)>\nu_p(x)$. Under the exclusive reward policy $I_{exc}$, the
$value_p$ of a site $x \leq W$, as given by Eq. \eqref{p-value}, is:
\[
\nu_p(x) = f(x) (1-\sigma^\star(x))^{k-1} = \alpha^{k-1}
\]
which proves the first part. For the second part, we first derive
$f(W+1) < \alpha^{k-1}$ as follows:
\begin{align*}
  &\sum_{x\leq
    W+1}\left(1-\left(\frac{f(W+1)}{f(x)}\right)^{\frac{1}{k-1}}\right)
  > 1\\
 &\sum_{x\leq W}
  \left(1-\left(\frac{f(W+1)}{f(x)}\right)^{\frac{1}{k-1}}\right) > 1
  \\
&\sum_{x\leq W} \left(\frac{f(W+1)}{f(x)}\right)^{\frac{1}{k-1}} <
  W-1\\
&f(W+1)^{\frac{1}{k-1}} < \frac{W-1}{\sum_{x\leq W}
    f(x)^{-\frac{1}{k-1}}}\\
  &f(W+1) < \alpha^{k-1}
\end{align*}
Now it suffices to observe that the $value_p$ of a site $x > W$ is
$\nu_p(x)=f(x)\leq~f(W+1)<\alpha^{k-1}=\nu_p(W)$.
\end{proof}

As claimed in Theorem \ref{thm-ESS}, it turns out that under the
exclusive policy $I_{exc}$, strategy $\sigma^{\star}$ is not only a
Nash Equilibrium, but in fact an ESS.

\subsection{$\sigma^{\star}$ has Optimal  Coverage}\label{sec:opt-cov}
In this subsection we prove Theorem \ref{thm-optimum}, that is, we prove that  for any strategy $\sigma$, ${\Cover}(\sigma^{\star})\geq {\Cover}(\sigma)$, with equality if and only if $\sigma=\sigma^{\star}$.
Before dwelling into the proof we note that algorithm
$A^\star$ was shown in \cite{SIROCCO} to be optimal for the setting in which $k$ independent searchers are searching for a treasure placed in one of $M$ boxes, according to a Bayesian distribution. The proof therein is long and relies on deep techniques
that were developed in \cite{STOC}. Since $\sigma^*$ agrees with $A^\star$ on the first round of $A^\star$ and since there are some similarities between the search and the coverage objectives, it might be possible to translate
the proof in \cite{SIROCCO,STOC} to show the
optimality of $\sigma^\star$ in terms of coverage. However, we managed to find a simple and concise proof of this
fact, that can also be used to show the uniqueness of $\sigma^\star$.
We provide it below. 
 
Note that maximizing ${\Cover}(p)$ is equivalent to  minimizing
${\cal{T}}(p)= \sum_x f(x)(1-p(x))^k$. We next show that $\sigma^{\star}$ minimizes the latter expression.
 
 First observe that since the number of resources is finite, it follows by a compactness argument  that 
the infimum of $\left\{{\cal{T}}(\sigma)\mid {\sigma \mbox{~is a distribution}}\right\}$ is attained by some distribution $\sigma_{\min}$.   
Assume by contradiction that $\sigma_{\min}\neq \sigma^{\star}$. Since both $\sigma_{\min}$ and $\sigma^{\star}$ are distributions, we must have some $x_0\leq W$ such that $\sigma^{\star}(x_0)>\sigma_{\min}(x_0)$ and some $x_1$ such that $\sigma^{\star}(x_1)<\sigma_{\min}(x_1)$. 
For sufficiently small $\epsilon>0$, let us define $\sigma'(x)$ which equals $\sigma_{\min}$ everywhere except that $\sigma'(x_1)=\sigma_{\min}(x_1)-\epsilon$ and $\sigma'(x_0)=\sigma_{\min}(x_0)+\epsilon$. In other words, we create $\sigma'$ from $\sigma_{\min}$ by shifting a small mass of probability from $x_1$ to $x_0$. 

Aiming to contradict the minimality of $\sigma_{\min}$, our  goal is to show that ${\cal{T}}(\sigma')<{\cal{T}}(\sigma_{\min})$. In other words, we aim to show that 
the following expression, which equals to ${\cal{T}}(\sigma_{\min})-{\cal{T}}(\sigma')$, is positive. 
\begin{equation}
  f(x_0) \left((1-\sigma_{\min}(x_0))^k-
  (1-\sigma_{\min}(x_0)-\epsilon)^k \right)
  +f(x_1) \left((1-\sigma_{\min}(x_1))^k-
  (1-\sigma_{\min}(x_1)+\epsilon)^k \right)
\end{equation}
By a Taylor expansion, for a small $\epsilon>0$, we obtain:
\begin{equation}
  {\cal{T}}(\sigma_{\min})-{\cal{T}}(\sigma')\approx \epsilon k \left(
  f(x_0)  (1-\sigma_{\min}(x_0))^{k-1}\right.
  \left. -  f(x_1)  (1-\sigma_{\min}(x_1))^{k-1}   \right) 
\end{equation}
By Eq.  \eqref{eq-W+1} and the definitions of $x_0$ and $x_1$, we know that 
\begin{equation}
  f(x_0) (1-\sigma_{\min}(x_0))^{k-1}>f(x_0)
  (1-\sigma^{\star}(x_0))^{k-1}
  \geq f(x_1) (1-\sigma^{\star}(x_1))^{k-1} >f(x_1)
  (1-\sigma_{\min}(x_1))^{k-1}
\end{equation}
 Hence, for sufficiently small $\epsilon>0$, we get that ${\cal{T}}(\sigma_{\min})>{\cal{T}}(\sigma')$, contradicting the minimality of $\sigma_{\min}$.  This establishes the fact that  $\sigma^{\star}=\sigma_{\min}$, implying that $\sigma^{\star}$ maximizes the coverage ${\Cover}(p)$. 
 
 Finally,  the fact that  $\sigma^{\star}=\sigma_{\min}$ also implies that the distribution that minimizes ${\cal{T}}(p)$ (and hence maximizes ${\Cover}(p)$) is unique. Hence,
 ${\Cover}(\sigma^{\star})= {\Cover}(\sigma)$ can occur only if $\sigma=\sigma^{\star}$. This concludes the proof of Theorem \ref{thm-optimum}. \qed

\section{The Ideal Free Distribution is an ESS}\label{app:ESS}
The goal of this section is to prove Theorem \ref{thm-ESS}, i.e, we aim to show that under $I_{exc}$, strategy 
 $\sigma^{\star}$ is an ESS.  

Let $\sigma$ be a strategy, distinct from $\sigma^\star$.
First note that, when playing against strategy $\sigma^\star$, because
it is the IFD, the expression of
$\mathcal{E}_k(\sigma;\sigma^{\star(k-1)})$ becomes :
\begin{equation}
  \mathcal{E}_k(\sigma;\sigma^{\star(k-1)}) = \sum_{x = 1}^M \sigma(x) \nu_{\sigma^\star}(x)
  = \sum_{x = 1}^W \sigma(x) \nu_{\sigma^\star}(W) + \sum_{x = W +1}^M \sigma(x) f(x)
\end{equation}
Since for every site $x > W$, we have $f(x) < \nu_{\sigma^\star}(W)$,
it is clear that every strategy $\sigma$ whose support is not a subset
of $[1,W]$ has
\[{\cal E}_k(\sigma;\sigma^{\star(k-1)}) < {\cal
  E}_k(\sigma^{\star};\sigma^{\star(k-1)})\] and therefore satisfies
the ESS characterization conditions with $m_\sigma = 0$.

Next, let us assume that the support of $\sigma$ is included in
$[1,W]$, and show that the ESS characterization conditions are satisfied for
$m_\sigma = 1$, i.e., we aim to show that
\begin{align}
  &\forall \ell, 1\leq \ell \leq k-2,\mbox{ we have }\notag\\
  &{\cal{E}}_k(\sigma^{\star};\sigma^\ell,\sigma^{\star(k-\ell-1)})>{\cal{E}}_k(\sigma;\sigma^\ell,\sigma^{\star(k-\ell-1)}) \label{eq:first},\\
  &\text{ and } {\cal{E}}_k(\sigma^{\star};\sigma^{\star(k-1)})={\cal{E}}_k(\sigma;\sigma^{\star(k-1)}).\label{eq:second}
\end{align}
In fact, to show that the ESS characterization conditions are satisfied for
$m_\sigma = 1$, we need only the case $\ell=1$ in Inequality \eqref{eq:first}, but we prove here a stronger stability criteria, showing that this inequality holds for every $1\leq \ell \leq k-2$. 

Eq.  \eqref{eq:second} follows quite simply from the observation above,
coupled with the assumption that the support of $\sigma$ is in
$[1,W]$, which gives both ${\cal{E}}_k(\sigma^{\star};\sigma^{\star(k-1)}) = \nu_{\sigma^\star}$ and ${\cal{E}}_k(\sigma;\sigma^{\star(k-1)}) = \nu_{\sigma^\star}$.
     
Eq. \eqref{eq:first} requires more elaboration. We first show that for every $1\leq \ell\leq k-2$: \[{\cal{E}}_k(\sigma^{\star};\sigma^\ell,\sigma^{\star(k-\ell-1)})
 \geq  {\cal{E}}_k(\sigma;\sigma^\ell,\sigma^{\star(k-\ell-1)})\]  and then show that equality holds iff $\sigma=\sigma^\star$.
We start with the following claim.\\

\begin{claim} \label{clm:expanding}
Expanding each side in Eq. \eqref{eq:first}, we obtain:
\begin{align}
  {\cal{E}}_k(\sigma^{\star};\sigma^\ell,\sigma^{\star(k-\ell-1)}) &=  
  {\alpha^{k-\ell-1}}\left(\sum_{x \leq W} f(x)^{\frac{\ell}{k-1}}
  (1-\sigma(x))^{\ell}-
  {\alpha} \sum_{x \leq
    W}f(x)^{\frac{\ell-1}{k-1}}(1-\sigma(x))^{\ell}\right),\\
  {\cal{E}}_k(\sigma;\sigma^\ell,\sigma^{\star(k-\ell-1)}) &=
  {\alpha^{k-\ell-1}} \left(\sum_{x \leq W} f(x)^{\frac{\ell}{k-1}}
  (1-\sigma(x))^{\ell} -
  \sum_{x \leq W} f(x)^{\frac{\ell}{k-1}} (1-\sigma(x))^{\ell+1}\right)
\end{align}
\end{claim}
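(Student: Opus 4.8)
The plan is to obtain both identities by a direct expansion, with no clever idea required: first write each payoff in terms of the exclusive policy, then substitute the closed form of $\sigma^\star$ from Section~\ref{sec:sigma} and chase the exponents of $f(x)$.

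First I would recall that under $I_{exc}$ a player sitting at a site $x$ collects $f(x)$ exactly when no opponent chose $x$, and $0$ otherwise. Hence, for any strategy $\rho$, conditioning on the site the focal player picks and using that its $k-1$ opponents (of whom $\ell$ play $\sigma$ and $k-\ell-1$ play $\sigma^\star$) choose independently,
\[
{\cal{E}}_k(\rho;\sigma^\ell,\sigma^{\star(k-\ell-1)})=\sum_{x=1}^M \rho(x)\,f(x)\,(1-\sigma(x))^{\ell}\,(1-\sigma^\star(x))^{k-\ell-1}.
\]
Since $\sigma^\star$ is supported on $[1,W]$ and, by the standing assumption of this paragraph, so is $\sigma$, for both $\rho\in\{\sigma^\star,\sigma\}$ the factor $\rho(x)$ kills every term with $x>W$, so the sum runs over $x\le W$ only.

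Next I would plug in $1-\sigma^\star(x)=\alpha\,f(x)^{-1/(k-1)}$ for $x\le W$. The one computation that matters is the exponent bookkeeping
\[
f(x)\,(1-\sigma^\star(x))^{k-\ell-1}=\alpha^{k-\ell-1}\,f(x)^{\,1-\frac{k-\ell-1}{k-1}}=\alpha^{k-\ell-1}\,f(x)^{\frac{\ell}{k-1}},
\]
which already pulls $\alpha^{k-\ell-1}$ out of both sums. For $\rho=\sigma^\star$ I then substitute $\sigma^\star(x)=1-\alpha f(x)^{-1/(k-1)}$ once more and distribute, turning the $x$-th summand into $\alpha^{k-\ell-1}\bigl(f(x)^{\ell/(k-1)}-\alpha f(x)^{(\ell-1)/(k-1)}\bigr)(1-\sigma(x))^{\ell}$, which is exactly the first displayed formula of the claim. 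For $\rho=\sigma$ I instead write $\sigma(x)=1-(1-\sigma(x))$, so that $\sigma(x)(1-\sigma(x))^{\ell}=(1-\sigma(x))^{\ell}-(1-\sigma(x))^{\ell+1}$; combining this with the same factorization of $f(x)(1-\sigma^\star(x))^{k-\ell-1}$ yields the second displayed formula.

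There is no genuine obstacle here --- the whole statement is bookkeeping. The only point deserving a line of care is the support restriction: it is the vanishing of the \emph{focal} player's weight $\rho(x)$ for $x>W$ (not any property of the opponents) that legitimately truncates the sums to $x\le W$ before the closed form of $\sigma^\star$ is even invoked, which is what makes the substitution $1-\sigma^\star(x)=\alpha f(x)^{-1/(k-1)}$ available throughout the remaining range.
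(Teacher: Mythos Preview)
Your proposal is correct and follows essentially the same route as the paper: write ${\cal{E}}_k(\rho;\sigma^\ell,\sigma^{\star(k-\ell-1)})=\sum_x \rho(x)f(x)(1-\sigma(x))^{\ell}(1-\sigma^\star(x))^{k-\ell-1}$, restrict to $x\le W$ via the support of $\rho$, substitute $1-\sigma^\star(x)=\alpha f(x)^{-1/(k-1)}$, and then expand $\sigma^\star(x)$ (respectively $\sigma(x)=1-(1-\sigma(x))$) for the two cases. The only cosmetic difference is that you unify both computations under a single formula for general $\rho$ before specializing, whereas the paper treats the two sides separately from the outset.
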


\noindent{\em Proof of Claim \ref{clm:expanding}.}
We start by expanding the left hand side of
Eq. \eqref{eq:first}.
\begin{align*}
  {\cal{E}}_k(\sigma^{\star};\sigma^\ell,\sigma^{\star(k-\ell-1)}) &=
  \sum_x f(x) \sigma^{\star}(x) (1-\sigma^{\star}(x))^{k-\ell-1} (1-\sigma(x))^{\ell}\\
  &= \sum_{x \leq W} (f(x) - \alpha f(x)^{\frac{k-2}{k-1}}) {\alpha^{k-\ell-1}}{f(x)^{\frac{\ell+1-k}{k-1}}}(1-\sigma(x))^{\ell}\\
  &= {\alpha^{k-\ell-1}} \sum_{x \leq W} (f(x)^{\frac{\ell}{k-1}} - \alpha f(x)^{\frac{\ell-1}{k-1}})(1-\sigma(x))^{\ell}\\
  &= {\alpha^{k-\ell-1}}\left(\sum_{x \leq W} f(x)^{\frac{\ell}{k-1}}
  (1-\sigma(x))^{\ell} - \alpha  \sum_{x \leq W} f(x)^{\frac{\ell-1}{k-1}}(1-\sigma(x))^{\ell}\right)
\end{align*}

This establishes the first part of the claim.\\

Next, we expand the right hand side of Eq. \eqref{eq:first}.
\begin{align*}
  {\cal{E}}_k(\sigma;\sigma^\ell,\sigma^{\star(k-\ell-1)}) &=
  \sum_x f(x) \sigma(x) (1-\sigma^{\star}(x))^{k-\ell-1}(1-\sigma(x))^{\ell}\\
  &= \sum_{x \leq W} f(x) \sigma(x) (1-\sigma^{\star}(x))^{k-\ell-1}(1-\sigma(x))^{\ell}\\
  &= \sum_{x \leq W} f(x) \sigma(x)  {\alpha^{k-\ell-1}}{f(x)^{\frac{\ell+1-k}{k-1}}}(1-\sigma(x))^{\ell}\\
  &= {\alpha^{k-\ell-1}}\sum_{x \leq W}  \sigma(x)  {f(x)^{\frac{\ell}{k-1}}}(1-\sigma(x))^{\ell}\\
  &= {\alpha^{k-\ell-1}} \left(\sum_{x \leq W}
    f(x)^{\frac{\ell}{k-1}} (1-\sigma(x))^{\ell} -
  \sum_{x \leq W} f(x)^{\frac{\ell}{k-1}} (1-\sigma(x))^{\ell+1}\right)
\end{align*}
 Concluding the proof of Claim \ref{clm:expanding}. 
\bigskip

By Claim \ref{clm:expanding}, it follows that the desired inequality is equivalent to:
\begin{equation*}
  \alpha \sum_{x \leq W}
  f(x)^{\frac{\ell-1}{k-1}}(1-\sigma(x))^{\ell} \leq \sum_{x \leq
    W} f(x)^\frac{\ell}{k-1} (1-\sigma(x))^{\ell+1}
\end{equation*}
By definition of $\alpha$, we rearrange this inequality into
\begin{equation*}
  (W - 1) \sum_{x \leq W}
  f(x)^{\frac{\ell-1}{k-1}}(1-\sigma(x))^{\ell} \leq \sum_{x \leq W}
  \frac{1}{f(x)^{\frac{1}{k-1}}} \sum_{x \leq W}
  f(x)^{\frac{\ell}{k-1}} (1-\sigma(x))^{\ell+1}
\end{equation*}
Decomposing
$f(x)^{\frac{\ell-1}{k-1}}=\frac{f(x)^{\frac{\ell}{\ell+1}\cdot
    \frac{\ell}{k-1}}}{f(x)^{\frac{1}{(\ell+1)(k-1)}}}$, and
applying H\"older's Inequality, we obtain:
\begin{equation*}
  \sum_{x \leq W} f(x)^{\frac{\ell-1}{k-1}}(1-\sigma(x))^{\ell} \leq
  \left( \sum_{x \leq W} \frac{1}{f(x)^{\frac{1}{k-1}}}
  \right)^{\frac{1}{\ell+1}} \left(\sum_{x \leq W}
  f(x)^{\frac{\ell}{k-1}}
  (1-\sigma(x))^{\ell+1}\right)^{\frac{\ell}{\ell+1}}
\end{equation*}
This implies that to prove that Eq.  \eqref{eq:first} holds,  it is sufficient to establish the following inequality:
\begin{equation*}
  (W-1) \left( \sum_{x \leq W} \frac{1}{f(x)^{\frac{1}{k-1}}} \right)^{\frac{1}{\ell+1}}  \leq
  \sum_{x \leq W} \frac{1}{f(x)^{\frac{1}{k-1}}} \left(\sum_{x \leq W} f(x)^{\frac{\ell}{k-1}}
  (1-\sigma(x))^{\ell+1}\right)^{\frac{1}{\ell+1}}
\end{equation*}
Using the fact that $\sigma$ is a distribution, and hence $ \sum_{x \leq W} (1-\sigma(x))=W-1$, it suffices to prove that
\begin{equation*}
  \sum_{x \leq W} (1-\sigma(x)) \left( \sum_{x \leq W} \frac{1}{f(x)^{\frac{1}{k-1}}} \right)^{\frac{1}{\ell+1}} \leq
  \sum_{x \leq W} \frac{1}{f(x)^{\frac{1}{k-1}}} \left(\sum_{x \leq W} f(x)^{\frac{\ell}{k-1}}
  (1-\sigma(x))^{\ell+1}\right)^{\frac{1}{\ell+1}}
\end{equation*}
Rearranging, this is equivalent to proving that 
\begin{equation}\label{holder}
  \sum_{x \leq W} (1-\sigma(x))\leq  \left( \sum_{x \leq W} \frac{1}{f(x)^{\frac{1}{k-1}}} \right)^{\frac{\ell}{\ell+1}}
  \left(\sum_{x \leq W} f(x)^{\frac{\ell}{k-1}}
  (1-\sigma(x))^{\ell+1}\right)^{\frac{1}{\ell+1}}
\end{equation}
Which finally follows by applying  H\"older's Inequality.

The fact that the inequality is strict for $\ell\geq 1$ follows from the following.
\begin{claim}\label{claimlinear}
For $\ell\geq 1$, $ ~{\cal{E}}_k(\sigma^{\star};\sigma^\ell,\sigma^{\star(k-\ell-1)}) =
{\cal{E}}_k(\sigma;\sigma^\ell,\sigma^{\star(k-\ell-1)}) ~\implies~\sigma = \sigma^{\star}$.
\end{claim}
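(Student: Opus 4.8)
\emph{Plan of proof (of Claim \ref{claimlinear}).} The plan is to revisit the chain of inequalities that was just used to establish ${\cal{E}}_k(\sigma^{\star};\sigma^\ell,\sigma^{\star(k-\ell-1)}) \geq {\cal{E}}_k(\sigma;\sigma^\ell,\sigma^{\star(k-\ell-1)})$ and to keep track of exactly when each step is tight, thereby reducing the equality case to the equality case of H\"older's inequality.

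First I would record, directly from Claim \ref{clm:expanding} and the definition of $\alpha$, the exact identity
\[
  {\cal{E}}_k(\sigma^{\star};\sigma^\ell,\sigma^{\star(k-\ell-1)}) - {\cal{E}}_k(\sigma;\sigma^\ell,\sigma^{\star(k-\ell-1)}) = \alpha^{k-\ell-1}\big(S_3 - \alpha S_2\big),
\]
where I abbreviate $S_1 := \sum_{x\leq W} f(x)^{-\frac{1}{k-1}}$, $S_2 := \sum_{x\leq W} f(x)^{\frac{\ell-1}{k-1}}(1-\sigma(x))^{\ell}$ and $S_3 := \sum_{x\leq W} f(x)^{\frac{\ell}{k-1}}(1-\sigma(x))^{\ell+1}$, so that $\alpha = (W-1)/S_1$. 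Since $\alpha^{k-\ell-1}>0$, it suffices to prove that $S_3 - \alpha S_2 = 0$ forces $\sigma=\sigma^{\star}$. If $W=1$ then the only distribution supported on $\{1\}$ is $\sigma^{\star}$ and there is nothing to prove, so I may assume $W\geq 2$, which guarantees $S_1,S_3>0$ (indeed $\sum_{x\leq W}(1-\sigma(x)) = W-1 \geq 1$).

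Next I would retrace the two applications of H\"older. The intermediate H\"older inequality applied to $S_2$ gives $S_2 \leq S_1^{1/(\ell+1)}S_3^{\ell/(\ell+1)}$, and the final inequality \eqref{holder}, after substituting $\sum_{x\leq W}(1-\sigma(x)) = W-1 = \alpha S_1$, reads $S_3^{1/(\ell+1)} \geq \alpha S_1^{1/(\ell+1)}$. Combining these, $S_3 - \alpha S_2 \geq S_3^{\ell/(\ell+1)}\big(S_3^{1/(\ell+1)} - \alpha S_1^{1/(\ell+1)}\big) \geq 0$, and vanishing of the left-hand side, together with $S_3>0$, forces $S_3^{1/(\ell+1)} = \alpha S_1^{1/(\ell+1)}$; that is, equality holds in \eqref{holder}.

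Finally I would invoke the equality condition of H\"older's inequality. Writing $1-\sigma(x) = g(x)^{\ell/(\ell+1)} h(x)^{1/(\ell+1)}$ with $g(x) := f(x)^{-1/(k-1)}$ and $h(x) := f(x)^{\ell/(k-1)}(1-\sigma(x))^{\ell+1}$, equality in \eqref{holder} means that the sequences $(g(x))_{x\leq W}$ and $(h(x))_{x\leq W}$ are proportional; since $g(x)>0$ for every $x\leq W$ and $h$ is not identically zero (as $W\geq 2$), there is a constant $\mu>0$ with $1-\sigma(x) = \mu\, f(x)^{-1/(k-1)}$ for all $x\leq W$, while $\sigma(x)=0$ for $x>W$ by hypothesis. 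This is precisely the Pareto shape of $\sigma^{\star}$, and imposing $\sum_{x\leq W}\sigma(x)=1$, i.e.\ $W-\mu S_1 = 1$, pins down $\mu = (W-1)/S_1 = \alpha$, whence $\sigma = \sigma^{\star}$. The only real obstacle here is bookkeeping: orienting the two H\"older steps so that equality in \eqref{holder} alone is forced by ${\cal{E}}_k$-equality, and dispatching the harmless degenerate case $W=1$ (and checking $S_3>0$ when $W\geq 2$); the H\"older equality criterion and the normalization that fixes $\mu=\alpha$ are then routine.
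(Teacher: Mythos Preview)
Your argument is correct and follows essentially the same route as the paper: both reduce the equality ${\cal{E}}_k(\sigma^{\star};\sigma^\ell,\sigma^{\star(k-\ell-1)}) = {\cal{E}}_k(\sigma;\sigma^\ell,\sigma^{\star(k-\ell-1)})$ to equality in the H\"older step \eqref{holder}, read off the Pareto form $1-\sigma(x)=\mu f(x)^{-1/(k-1)}$ from the H\"older equality criterion, and then pin down $\mu=\alpha$ by normalization. Your version is a touch more explicit than the paper's in tracking why equality in the overall chain forces equality specifically in \eqref{holder} (via the factorization $S_3-\alpha S_2\ge S_3^{\ell/(\ell+1)}(S_3^{1/(\ell+1)}-\alpha S_1^{1/(\ell+1)})$), and you also dispatch the degenerate case $W=1$ and verify $S_3>0$, which the paper leaves implicit.
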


\noindent{\em Proof of Claim \ref{claimlinear}.}\label{app:claimlinear}
The proof of the claim is based on the observation that the former
equality implies, in particular, an equality in H\"older's inequality
in Eq. \eqref{holder}. For $\ell\geq 1$, this can be the case only if the two sides are
linearly dependent, i.e., for each $x\leq W$, we have \[f(x)^{\frac{\ell}{k-1}} (1-\sigma(x))^{\ell+1} = \frac{\beta^{\ell+1}}{f(x)^{\frac{1}{k-1}}}\] where $\beta$ is some fixed constant. 
Together with the fact that the support of $\sigma$ is included in  $[W]$, the strategy $\sigma$ must
therefore be of the form
  \begin{equation*}
    \sigma(x) =
    \begin{cases}
      1-\frac{\beta}{f(x)^{\frac{1}{k-1}}} &\text{if } x \leq W\\
      0 &\text{otherwise}
    \end{cases}
  \end{equation*}
  Finally, we know that \[\sum_{x\leq W} \sigma(x) = \sum_{x\leq
    W} \left(1-\frac{\beta}{f(x)^{\frac{1}{k-1}}}\right) = 1\] Solving this for $\beta$ gives
  \[\beta = \frac{W - 1}{\sum_{x \leq W} f(x)^{-\frac{1}{k-1}}} =
  \alpha\] and therefore $\sigma = \sigma^{\star}$.  This completes the proof of Claim \ref{claimlinear}, and thus established the proof for Theorem~\ref{thm-ESS}. \qed

\section{The Criticality of the Exclusive Collision Cost Function}\label{sec:criticality}
The goal in this section is to prove Theorem \ref{thm-critical}. That is,
 fixing a congestion function $C(\cdot)\neq C_{exc}(\cdot)$, we aim to show that  $\SPoA(C)>1$.

  First recall that by definition of congestion functions, $C(\cdot)$ is non-increasing. Hence Observation~\ref{obs} applies, and the existence of the IFD is guaranteed for every value function $f$. Our plan is to show that there exists a value function $f$, for which this IFD is different than $\sigma^{\star}$.  
 Once this is established, the fact that $\SPoA(C)>1$ follows from Theorem \ref{thm-optimum}.

Assume, towards contradiction, that for every $f$, the IFD under $C(\cdot)$ is precisely $\sigma^{\star}$.
Let $M\gg k$, and let $f$ be a strictly decreasing function, that, nevertheless, decreases very slowly, such that, e.g., for every $x\leq y$ in $[1,M]$, we have 
$\frac{f(y)}{f(x)}\geq \frac{f(M)}{f(1)}> \left(1-\frac{1}{2k}\right)^{k-1}$. 
Together with the definition of $W$ (see the pseudocode  of Algorithm $\sigma^{\star}(x)$),  
 we obtain that $W\geq 2k$. 
 
  Because $\sigma^{\star}$ is the IFD, we have that $\forall x
  \leq W$, $\nu_p(x) = \nu_p(W)$. Let us develop this expression:
  \begin{align*}
    \nu_p(W) &= f(x) \sum_{\ell = 1}^{k} C(\ell) P(x, \ell - 1)\\
    &= f(x) \sum_{\ell = 1}^{k} C(\ell) \binom{k-1}{\ell-1}
    {\sigma^{\star}}(x)^{\ell-1} (1-\sigma^{\star}(x))^{k-\ell}\\
    &= f(x) \sum_{\ell = 1}^{k} C(\ell) \binom{k-1}{\ell-1}
    \left(1-\alpha f(x)^{-\frac{1}{k-1}}\right)^{\ell-1} \alpha^{k -
      \ell} f(x)^{-\frac{k-\ell}{k-1}}\\
    &=\sum_{\ell = 1}^{k} C(\ell) \binom{k-1}{\ell-1} \left(1-\alpha
    f(x)^{-\frac{1}{k-1}}\right)^{\ell-1} \alpha^{k - \ell}
    f(x)^{\frac{\ell-1}{k-1}}
  \end{align*}
The last line above is  a polynomial equation in $f(x)^{\frac{1}{k-1}}$ of degree at most $k-1$.
 Note that this polynomial is not a constant. Indeed, since $C\neq C_{exc}$ and $C(1)=C_{exc}(1)=1$, there must exist $\ell\geq 2$, such that $C(\ell)\neq 0$. Hence, the $\ell-1$'st coefficient of the polynomial is non-zero, implying that 
    this polynomial is not a constant. It follows that there are at most $k-1$ of values for
  $f(x)$ that would satisfy this equation (recall that $f$ is strictly decreasing). This yields a contradiction since the equation is supposed to be true for
  every site $x \leq W$, and $W\geq 2k$.  This concludes the proof of Theorem \ref{thm-critical}. \qed

\section{Discussion}
\subsection{Conclusions and Future Directions}\label{sec:conclusions}

This paper focuses on the mechanism design challenges in the basic dispersal game inspired by animal behavior. Our focus is on congestion policies, where there is no control over the utility associated with a site, and only the price associated with congestion can vary. Indeed, for natural scenarios in particular, it makes sense that evolution would impact the competition levels but the values of sites would be determined by the environment. 
The definition of congestion policies neglects, however, a factor that might play a significant role in several scenarios, namely, the cost incurred when in visiting a site $x$ (e.g., the energetic cost consumed while traveling to $x$). Studying the more general model that takes into account this extra cost is left for future work.

The coverage measure can find relevance in contexts of animal foraging. This is because effective group coverage can indirectly contribute to the fitness of individuals, especially when the group is in competition with other groups over the same set of resources \cite{Social-foraging}. Indeed, in this case, the consumption of many resources by conspecifics reduces the fitness of individuals in the competing group and vice versa. 
 It is therefore of interest to develop a more comprehensive game-theoretic  perspective that would integrate into the individual fitness both the competition inside the group, and the competition between groups \cite{Eldar}. A plausible insight that can be learned from our results is that aggressive behavior between conspecifics, which appears to be wasteful or even harmful from the group's perspective, can in fact be effective on the group level, as it allows for better coverage. 
For further discussion on the interpretation of our results in the context of animals see Section \ref{sec:discussion}. 

Our definition of coverage assumes that a single player in a site $x$ suffices to consume the full utility $f(x)$. In natural settings, e.g., in animal foraging scenarios, this assumption might be too strong, and it is of interest to study relaxations of it. One possibility is to assume that there is a maximum capacity of consumption per individual. 
In some other cases, a more complex definition of coverage might be appropriate. We leave the study of such generalizations for future work.

Although we formally discuss only the one-shot scenario, we wish to stress that the ESS, by definition, is stable when played repeatedly in an infinite population. We did not, however, investigate other forms of repetition, which are left for future work.

Finally, game theory in computer science typically studies the notion of Nash Equilibrium as the main concept to capture stability. This notion was originally designed for games with small number of players. When considering games with thousands or even millions of players, it becomes less plausible to assume that all players are rational, and one may seek for stronger notions of stability. ESS, which was developed in the evolutionary game theoretical community, appears to be a good candidate to capture stability in large populations. 
 We believe that this notion should be applied more in game theory works that correspond to large distributed settings

\subsection{Informal Discussion on Animal Dispersal}\label{sec:discussion}
From the perspective of the group, consuming a large amount of food by all members together can indirectly increase the fitness of the group members and hence become significant for their survival.  For example, when multiple species compete over the same patched resource, a thorough consumption of patches by one species implies less food remaining for the other.
 In this respect, it would be interesting to experiment on the interplay between two similar species that compete over the same set of resources, but differ in their level of aggressiveness toward conspecifics \cite{noga}. In cases where there is no direct contact between the species (e.g., when they feed at different times of the day), it may appear that the more aggressive species would be inferior as it induces unnecessary waste of energy and risks of injury. However, our results suggest that perhaps it is the aggressive species that would turn out to be the superior one. Indeed, its aggressive behavior incurs higher collisions costs, which may drive individuals to better cover the resources, on the expense of the more peaceful species.   
 
During foraging, both selfish individuals and collaborating groups (e.g., ants or bees) share the basic
 challenge of balancing the  need to exploit the more promising resources, while trying to avoid overlaps (or collisions) with conspecifics \cite{Social-foraging,ANTS}. 
This tradeoff, however, is manifested in the two cases in a manner that appears to be very different: While collaborative groups aim to optimize the tradeoff by setting the parameters in a somewhat ``centralized'' manner, the competing  dynamics is governed by a selection process  yielding an evolutionary stable strategy which need not be optimal for  individuals. 
As suggested here, despite these differences, when colliding individuals are punished severely, the two perspectives yield similar strategies. 

Finally, the alignment in behavior between collaborating and competing individuals, as reported here, may  suggest a modest contribution to the theory of evolution of eusocial species. 
The currently dominated theory proposes that  eusocial species have evolved from self motivated individuals 
  due to kin selection \cite{Hamilton}. 
An issue that is discussed less is that a shift in the ``motivation focus'' can potentially be accompanied by a  shift in function and hence, perhaps, also in structure \cite{MCNAMARA}. Indeed, in principle, while evolving from self-interested to altruistic, an individual which had evolved in a way that was tailored to its own needs, may now need to adapt to the collective needs. As these needs are on different scales, they may not necessarily be akin \cite{JEB}, and hence a long evolutionary process may  be required in order to bridge this gap.   
In this respect,  that seemingly cooperative behavior can arise in competing systems, as demonstrated here, suggests that the behavioral transition, at least in the context of  dispersal or forage, may have been smoother than what one might expect.

\bigskip
\paragraph{Acknowledgements.} The authors are thankful for Yoav Rodeh, Yossi Yovel, Yuval Emek, Lucas Boczkowski, Emanuele Natale, and Ofer Feinerman for useful discussions concerning both the model and its applications to biological contexts. 

\bibliographystyle{plain}
\bibliography{bib}

\clearpage
\centerline{\Large\bf APPENDIX}
\bigskip
\appendix
\bigskip

\section{Proof Sketch for Observation \ref{obs}}\label{app:obs}

The existence of the IFD follows by the fact that $\nu_p(x)$ is non-increasing with $p(x)$.
  By definition, the IFD is a symmetric Nash Equilibrium of the
  game.  We now prove its
  uniqueness. Imagine there are two different symmetric Nash Equilibria,
  $\pi$ and $\pi'$, respectively associated with their values
  $\nu_\pi$ and $\nu_{\pi'}$. Note that for any fixed $x$, the $value_p$
  $\nu_p(x)$ is a strictly decreasing function of $p(x)$. Therefore
  \[\pi = \pi' \iff \nu_{\pi} = \nu_{\pi'}\] Assume by contradiction
  that $\nu_{\pi} > \nu_{\pi'}$. This implies that for any $x \in
  [\pi],\ \pi(x) < \pi'(x)$, where $[\pi]$ denotes the support of
  $\pi$, i.e the set of sites explored with positive
  probability. Summing over all sites in $[\pi]$, we get \[\sum_{x \in
    [\pi]} \pi(x) < \sum_{x \in [\pi]} \pi'(x)\] Since the left-hand
  size is equal to 1, we get a contradiction. This proves the uniqueness
  of the symmetric Nash Equilibrium, and of the IFD. \qed
\bigskip

\end{document}